\newcommand{\e}{\mathrm{e}}
\newcommand{\roc}{\zeta^{\scriptscriptstyle (2)}}
\newcommand{\roa}{\zeta^{\scriptscriptstyle (\text{1})}}
\newcommand{\road}{\zeta^{\scriptscriptstyle (\text{d})}}
\newcommand{\rod}{\rho^{\scriptscriptstyle (\text{d})}}
\newcommand{\ros}{\rho^{\scriptscriptstyle (\text{s})}}
\newcommand{\tauK}{\tau^{\scriptscriptstyle (\text{K})}}
\newcommand{\sign}{\operatorname{sign}}
\newcommand \be  {\begin{equation}}
\newcommand \bea {\begin{eqnarray} \nonumber }
\newcommand \ee  {\end{equation}}
\newcommand \eea {\end{eqnarray}}
\newcommand{\tUU}{\tau_{\scriptscriptstyle UU}}
\newcommand{\tUL}{\tau_{\scriptscriptstyle UL}}
\newcommand{\tLU}{\tau_{\scriptscriptstyle LU}}
\newcommand{\tLL}{\tau_{\scriptscriptstyle LL}}
\begin{document}

\markboth{R.~Chicheportiche \and J.-P.~Bouchaud}
{The joint distribution of stock returns is not elliptical}

\catchline{}{}{}{}{}

\title{THE JOINT DISTRIBUTION OF STOCK RETURNS IS NOT ELLIPTICAL}

\author{R\'EMY~CHICHEPORTICHE}

\address{Capital Fund Management, 6--8 boulevard Haussmann\\
Paris, 75009, France\\ 
\email{Remy.Chicheportiche@cfm.fr} }

\author{JEAN-PHILIPPE~BOUCHAUD}

\address{Capital Fund Management, 6--8 boulevard Haussmann\\
Paris, 75009, France\\
Jean-Philippe.Bouchaud@cfm.fr}

\maketitle

\begin{history}
\received{6 September 2010}
\accepted{21 November 2011}
\end{history}

\begin{abstract}
Using a large set of daily US and Japanese stock returns, we test in detail the relevance of Student models,
and of more general elliptical models, for describing the joint distribution of returns. 
We find that while Student copulas provide a good approximation for strongly correlated pairs of stocks, 
systematic discrepancies appear as the linear correlation between stocks decreases, that rule out {\it all} elliptical models.  
Intuitively, the failure of elliptical models can be traced to the inadequacy of the assumption of a single volatility mode for all
stocks. 
We suggest several ideas of methodological interest to efficiently visualise and compare different copulas. 
We identify the rescaled difference with the Gaussian copula and the central value of the copula as strongly discriminating observables. 
We insist on the need to shun away from formal choices of copulas with no financial interpretation.
\end{abstract}

\keywords{Copulas; Stock returns; Multivariate distribution; Linear correlation; Non-linear dependences; Student distribution; Elliptical distributions}

\section{Introduction}

The most important input of portfolio risk analysis and portfolio optimisation 
is the correlation matrix of the different assets. 
In order to diversify away the risk, one must avoid allocating on bundles of correlated assets, 
an information in principle contained in the correlation matrix. 
The seemingly simple mean-variance Markowitz program is however well known to be full of thorns. 
In particular, the empirical determination of large correlation matrices turns out to be difficult, 
and some astute ``cleaning'' schemes must be devised before using it for constructing optimal allocations \cite{ledoit2004well}. 
This topic has been the focus of intense research in the recent years, some inspired by Random Matrix Theory 
(for a review see \cite{elkaroui2009,potters2009financial}) or clustering ideas \cite{marsili2004,tola2008cluster,tumminello2007hierarchically}. 

There are however many situations of practical interest where the (linear) correlation matrix is 
inadequate or insufficient \cite{bouchaud2003theory,malevergne2006extreme,mashal2002beyond}. 
For example, one could be more interested in minimising the probability of large negative returns 
rather than the variance of the portfolio. 
Another example is the Gamma-risk of option portfolios, where the correlation of the squared-returns of the underlyings is needed. 
Credit derivatives are bets on the probability of simultaneous default of companies or of individuals; 
again, an estimate of the correlation of tail events is required (but probably very hard to ascertain empirically) 
\cite{frey2001modelling}, and for a recent interesting review \cite{brigo}. 

Apart from the case of multivariate Gaussian variables, the description of non-linear dependence 
is not reducible to the linear correlation matrix. The general problem of parameterising the full 
joint probability distribution of $N$ random variables can be ``factorised'' into the specification 
of all the marginals on the one hand, and of the dependence structure (called the `copula') of 
$N$ standardised variables with uniform distribution in $[0,1]$, on the other hand. 
The nearly trivial statement that all multivariate distributions can be represented in that way is 
called Sklar's Theorem \cite{embrechts02correlation,sklar1959fonctions}. 
Following a typical pattern of mathematical finance, the introduction of copulas ten years ago 
has been followed by a calibration spree, with academics and financial engineers frantically looking 
for copulas to best represent their pet multivariate problem. 
But instead of trying to understand the economic or financial mechanisms that lead to some particular 
dependence between assets and construct copulas that encode these mechanisms, the methodology has been
--- as is sadly often the case --- to brute force calibrate on data copulas straight out from statistics 
handbooks \cite{durrleman2000copula,fermanian2005some,fischer2009empirical,fortin2002tail,patton2001estimation}.
The ``best'' copula is then decided from some quality-of-fit criterion, irrespective of whether the 
copula makes any intuitive sense (some examples are given below). 
This is reminiscent of the `local volatility models' for option markets \cite{dupire1994pricing}: 
although the model makes no intuitive sense and cannot describe the actual dynamics of the underlying asset, 
it is versatile enough to allow the calibration of almost any option smile (see \cite{hagan2002woodward}). 
This explains why this model is heavily used in the financial industry. Unfortunately, a blind calibration of 
some unwarranted model (even when the fit is perfect) is a recipe for disaster. 
If the underlying reality is not captured by the model, it will most likely derail in rough times --- a particularly 
bad feature for risk management! Another way to express our point is to use a Bayesian language: 
there are families of models for which the `prior' likelihood is clearly extremely small 
--- we discuss below the case of Archimedean copulas. 
Statistical tests are not enough --- intuition and plausibilty are mandatory. 

The aim of this paper is to study in depth the family of elliptical copulas, in particular Student copulas, 
that have been much used in a financial context and indeed have a simple intuitive interpretation 
\cite{embrechts02correlation,frahm2003elliptical,hult2002multivariate,luo2009t,malevergne2006extreme,shaw2007copula}. 
We investigate in detail whether or not such copulas can faithfully represent the
joint distribution of the returns of US stocks. (We have also studied other markets as well). 
We unveil clear, systematic discrepancies between our empirical data and the corresponding predictions of elliptical models. 
These deviations are qualitatively the same for different tranches of market capitalisations, 
different time periods and different markets. 
Based on the financial interpretation of elliptical copulas, we argue that such discrepancies are actually expected, 
and propose {the ingredients of} a generalisation of elliptical copulas 
that should capture adequately non-linear dependences in stock markets. 
The full study of this generalised model, together with a calibration procedure and stability tests, 
will be provided in a subsequent publication.

The outline of the paper is as follows. 
We first review different measures of dependence (non-linear correlations, tail dependence), including copulas. 
We discuss simple ways to visualise the information contained in a given copula. 
We then introduce the family of elliptical copulas, with special emphasis on Student copulas and log-normal copulas, 
insisting on their transparent financial interpretation, in contrast with the popular but rather implausible Archimedean copulas. 
We then carefully compare our comprehensive empirical data on stock markets with the predictions of elliptical models 
--- some of which being parameter free, 
and conclude that elliptical copulas fail to describe the full dependence structure of equity markets. 

\section{Bivariate measures of dependence and copulas}

In this section, we recall several bivariate measures of dependence, and study their properties 
when the distribution of the random pair under scrutiny is elliptical. The incentive to focus on bivariate measures comes from the theoretical 
property that all the marginals, including bivariate, of a multivariate elliptical distribution are themselves elliptical.
In turn, a motivated statement that the pairwise distributions are not elliptical is enough to claim the non-ellipticity of the joint multivariate distribution.
This is the basic line of argumentation of the present paper. 

\subsection{Correlation coefficients}

Beyond the standard correlation coefficient, one can characterise the dependence structure 
through the correlation between powers of the random variables of interest:
\begin{align}
	\rod_{ij}&=\frac{\mathds{E}[S_i S_j |X_i X_j|^d]-\mathds{E}[S_i|X_i|^d]\mathds{E}[S_j|X_j|^d]}{\sqrt{\mathds{V}[S_i|X_i|^d]\,\mathds{V}[S_j|X_j|^d]}}\\
	\road_{ij}&=\frac{\mathds{E}[|X_iX_j|^d]-\mathds{E}[|X_i|^d]\mathds{E}[|X_i|^d]}{\sqrt{\mathds{V}[|X_i|^d]\,\mathds{V}[|X_j|^d]}},
\end{align}
where $S_i = \sign(X_i)$ and provided the variances $\mathds{V}$ in the denominators are well defined. 
The case $\rod$ for $d=1$ corresponds to the usual linear correlation coefficient, 
for which we will use the standard notation $\rho$, 
whereas $\road$ for $d=2$ is the correlation of the squared returns, 
that would appear in the Gamma-risk of a $\Delta$-hedged option portfolio. 
However, high values of $d$ are expected to be very noisy in the presence of power-law tails 
(as is the case for financial returns) and one should seek for lower order moments, 
such as $\road$ for $d=1$ which also captures the correlation between the {\it amplitudes} (or the volatility) of price moves, 
or even $\ros \equiv \rod$ for $d=0$ that measures the correlation of the signs.

In the case of bivariate Gaussian variables with zero mean, all these correlations can be expressed in terms of $\rho$. 
For example, the quadratic correlation $\roc = \road$ for $d=2$ is given by:
\be
\roc = \rho^2;
\ee
whereas the correlation of absolute values $\roa=\road$ for $d=1$ is given by:
\be
\roa = \frac{D(\rho)-1}{\frac{\pi}{2}-1},
\ee
with $D(r)=\sqrt{1-r^2}+r\,\arcsin r$. The correlation of signs is given by \mbox{$\ros= \frac{2}{\pi} \arcsin \rho$}. 
For some other classes of distributions, the higher-order coefficients $\rod$ and $\road$ are explicit functions of the
coefficient of linear correlation. This is for example the case of Student variables (see Fig.~\ref{fig:dep}) 
and more generally for all elliptical distributions (see below).

\subsection{Tail dependence}

Another characterisation of dependence, of great importance for risk management purposes, 
is the so-called \emph{coefficient of tail dependence} which measures the joint probability of extreme events. 
More precisely, the upper tail dependence is defined as \cite{embrechts02correlation,malevergne2006extreme}:
\begin{equation}\label{eq:def_tauUU}
	\tUU(p)=\mathds{P}\left[X_i>\mathcal{P}_{\negmedspace{\scriptscriptstyle <},i}^{-1}(p) \mid{}X_j>\mathcal{P}_{\negmedspace{\scriptscriptstyle <},j}^{-1}(p) \right],
\end{equation}
where $\mathcal{P}_{\negmedspace{\scriptscriptstyle <},k}$ is the cumulative distribution function (cdf) of $X_k$, 
and $p$ a certain probability level. In spite of its seemingly assymetric definition, it is easy to show that 
$\tUU$ is in fact symmetric in $X_i\leftrightarrow{}X_j$ when all univariate marginals are identical.
When $p \to 1$, $\tUU^*=\tUU(1)$ measures the probability that $X_i$ takes a very 
large positive value knowing that $X_j$ is also very large, and defines the asymptotic tail dependence.
Random variables can be strongly dependent from the point of view of linear correlations, 
while being nearly independent in the extremes. 
For example, bivariate Gaussian variables are such that $\tUU^*=0$ for any value of $\rho < 1$. 
The lower tail dependence $\tLL$ is defined similarly:
\begin{equation}\label{eq:def_tauLL}
	\tLL(p)=\mathds{P}\left[X_i<\mathcal{P}_{\negmedspace{\scriptscriptstyle <},i}^{-1}(1-p) \mid{}X_j<\mathcal{P}_{\negmedspace{\scriptscriptstyle <},j}^{-1}(1-p) \right],
\end{equation}
and is equal to $\tUU(p)$ for symmetric bivariate distributions. 
One can also define mixed tail dependence, for example:
\begin{equation}\label{eq:def_tauUL}
	\tLU(p)=\mathds{P}\left[X_i<\mathcal{P}_{\negmedspace{\scriptscriptstyle <},i}^{-1}(1-p) \mid{}X_j>\mathcal{P}_{\negmedspace{\scriptscriptstyle <},j}^{-1}(p) \right],
\end{equation}
with obvious interpretations. 

\subsection{Copulas}\label{ssec:copulas}

There are many other possible coefficients of dependence, such as Spearman's rho or Kendall's tau, that measure, 
respectively, the correlation of ranks or the ``concordance'' probability (see e.g. \cite{numericalrecipes} for an introduction). 
Both these measures are invariant under any increasing transformations. 
More generally, the copula (generalisable to dimensions larger than 2), 
encodes all the dependence between random variables that is invariant under increasing transformations.
The copula of a random pair $(X_i,X_j)$ is the joint cdf of $U_i=\mathcal{P}_{\negmedspace{\scriptscriptstyle <},i}(X_i)$ 
and $U_j=\mathcal{P}_{\negmedspace{\scriptscriptstyle <},j}(X_j)$:
\begin{equation}\label{eq:def_copule}
	C(u_i,u_j)= \mathds{P}\left[\mathcal{P}_{\negmedspace{\scriptscriptstyle <},i}(X_i)\leq{}u_i 
	            \textrm{ and }  \mathcal{P}_{\negmedspace{\scriptscriptstyle <},j}(X_j)\leq{}u_j\right]
\end{equation}
Since the marginals of $U_i$ and $U_j$ are uniform by construction, the copula only captures their degree of ``entanglement''. 
For independent random variables, $C(u_i,u_j) \equiv u_iu_j$. 

Whereas the $\rod$'s and $\road$'s depend on the marginal distributions, the tail dependences, 
Spearman's rho and Kendall's tau can be fully expressed in terms of the copula only. 
For example, Kendall's tau is given by $\tauK=4\int_{[0,1]^2} C(u_i,u_j) {\rm d}C(u_i,u_j)-1$, 
while the tail dependence coefficients can be expressed as:
\begin{subequations}\label{eq:tau_C}
\begin{align}
	       \frac{1-2p+C(p,p)}{1-p}&=\tUU(p)  & 
	\frac{C(1\!-\!p,1\!-\!p)}{1-p}&=\tLL(p) \\ 
	  \frac{1-p-C(p,1\!-\!p)}{1-p}&=\tUL(p)  & 
	  \frac{1-p-C(1\!-\!p,p)}{1-p}&=\tLU(p).
\end{align}
\end{subequations}

Copulas are not easy to visualise, first because they need to be represented as 3-D plot of a surface in two dimensions 
and second because there are bounds (called Fr\'echet bounds \cite{malevergne2006extreme}) within which $C(u,v)$ is constrained to live, 
and that compresse the difference between different copulas (the situation is even worse in higher dimensions). 
Estimating copula densities, on the other hand, is even more difficult than estimating univariate densities, especially in the tails. 
We therefore propose to focus on the diagonal of the copula, $C(p,p)$ and the anti-diagonal, $C(p,1-p)$, 
that capture part of the information contained in the full copula, and can be represented as 1-dimensional plots. 
Furthermore, in order to emphasise the difference with the case of independent variables, 
it is expedient to consider the following quantities:
\begin{subequations}
\begin{align}
	\frac{C(p,p)-p^2}{p(1\!-\!p)}&=\tUU(p)+\tLL(1-p)-1\\
	\frac{C(p,1\!-\!p)-p(1\!-\!p)}{p(1\!-\!p)}&=1-\tUL(p)-\tLU(1-p),
\end{align}
\end{subequations}
where the normalisation is chosen such that the tail correlations appear naturally. 
Note in particular that the diagonal quantity tends to the asymptotic tail dependence 
coefficients $\tUU^*$ ($\tLL^*$) when $p \to 1$ ($p \to 0$). 
Similarly, the anti-diagonal tends to $-\tUL^*$ as $p \to 1$ and to $-\tLU^*$ as $p \to 0$.

Another important reference point is the Gaussian copula $C_G(u,v)$, 
and we will consider below the normalised differences along the diagonal and the anti-diagonal:
\begin{equation}\label{eq:Delta}
	\Delta_d(p)=\frac{C(p,p)-C_G(p,p)}{p(1\!-\!p)}; \qquad \Delta_a(p)=\frac{C(p,1\!-\!p)-C_G(p,1\!-\!p)}{p(1\!-\!p)},
\end{equation}
which again tend to the asymptotic tail dependence coefficients in the limits $p \to 1$ ($p \to 0$), 
owing to the fact that these coefficients are all zero in the Gaussian case.

The centerpoint of the copula, $C(\tfrac{1}{2},\tfrac{1}{2})$, is particularly interesting: 
it is the probability that both variables are simultaneously below their respective median. 
For bivariate Gaussian variables, one can show that:
\be
C_G(\tfrac{1}{2},\tfrac{1}{2}) = \frac14 + \frac{1}{2 \pi} \arcsin \rho \equiv \frac14 \left(1 + \ros\right),
\ee
where $\ros$ is the sign correlation coefficient defined above. 
The trivial but remarkable fact is that the above expression holds for any elliptical models, 
that we define and discuss in the next section. 
This will provide a powerful test to discriminate between empirical data 
and a large class of copulas that have been proposed in the literature.

\section{Review of elliptical models}\label{sec:elliptical_model}

\subsection{General elliptical models}
Elliptical random variables $X_i$ can be simply generated by multiplying 
standardised Gaussian random variables $\epsilon_i$ with a common random 
(strictly positive) factor $\sigma$, drawn independently from the $\epsilon$'s 
from an arbitrary distribution $P(\sigma)$: \cite{embrechts02correlation}
\begin{equation}
	X_i=\mu_i+\sigma\cdot\epsilon_i\qquad{}i=1\ldots{}N
\end{equation}
where $\mu_i$ is a location parameter (set to zero in the following), 
and \mbox{$\boldsymbol\epsilon\sim\mathcal{N}(\mathbf{0},\Sigma)$}. 
Such models are called ``elliptical'' because the corresponding multivariate probability distribution function (pdf)
depends only on the rotationally invariant quadratic form $(\boldsymbol{x-\mu})^T \Sigma^{-1} (\boldsymbol{x-\mu})$; 
therefore quantile levels define ellipses in $N$ dimensions 
(see \cite{cambanis1981theory} for the construction and properties of elliptically contoured multivariate distributions). 

In a financial context, elliptical models are basically stochastic volatility models with arbitrary dynamics such that the 
ergodic distribution is $P(\sigma)$ (since we model only single-time distributions, the time ordering is irrelevant here).
The important assumption, however, is that the random amplitude factor $\sigma$ is \emph{equal for all stocks} (or all assets in 
a more general context). In other words, one assumes that the mechanisms leading to increased levels of volatility 
affect all individual stocks identically. There is a unique market volatility factor. Of course, this is a very restrictive 
assumption since one should expect {\it a priori} other, sector specific, sources of volatility. This, we believe, is the 
main reason for the discrepancies between elliptical models and the empirical data reported below.

Some of the above measures of dependence can be explicitely computed for elliptical models. Introducing the following ratios:
$f_d=\mathds{E}[\sigma^{2d}]/\mathds{E}[\sigma^d]^2$, one readily finds:
\begin{subequations}\label{eq:coeffs_ell}
\begin{align}
	\roc(\rho)&=\frac{f_2\cdot(1+2\rho^2)-1}{3 f_2-1}\\
	\roa(\rho)&=\frac{f_1\cdot D(\rho)-1}{\frac{\pi}{2}f_1-1}\\\label{eq:C05_ell}
	 C^*(\rho)&=\frac{1}{4}+\frac{1}{2\pi}\arcsin \rho = \frac{1}{4}\left(1+\ros\right),
\end{align}
\end{subequations}
where $C^* \equiv C(\tfrac{1}{2},\tfrac{1}{2})$.
Note that $f_2$ is related to the kurtosis of the $X$'s through the relation $\kappa=3(f_2-1)$. 
The calculation of the tail correlation coefficients depends on the specific form of $P(\sigma)$, 
for which several choices are possible. We will focus in the following on two of them, 
corresponding to the Student model and the log-normal model. 

\subsection{The Student ensemble} 

When the distribution of the square volatility is inverse Gamma, i.e. $P(u=\sigma^2) \propto \e^{-\frac{1}{u}}/u^{1+\nu/2}$,
the joint pdf of the returns turns out to have an explicit form \cite{demarta2005t,embrechts02correlation,malevergne2006extreme}:
\begin{equation}
   t_{\nu}(\boldsymbol{x})=\frac{1}{\sqrt{(\nu\pi)^N\det\Sigma}}\frac{\Gamma(\frac{\nu+N}{2})}
   {\Gamma(\frac{\nu}{2})}\left(1+\frac{\boldsymbol{x}^T\Sigma^{-1}\boldsymbol{x}}{\nu}\right)^{-\frac{\nu+N}{2}}
\end{equation}
This is the multivariate Student distribution with $\nu$ degrees of freedom for $N$ random variables with dispersion matrix $\Sigma$.
Clearly, the marginal distribution of $t_{\nu}(\boldsymbol{x})$
is itself a Student distribution with a tail exponent equal to $\nu$, which is well known to describe satisfactorily the univariate pdf of high-frequency returns
(from a few minutes to a day or so), with an exponent in the range $[3,5]$ (see e.g. Refs.~\cite{bouchaud2003theory,cont2001empirical,fuentes2009universal,plerou1999scaling}).
The multivariate Student model is therefore a rather natural choice; its corresponding copula defines the Student copula. 
For $N=2$, it is entirely parameterised by $\nu$ and the correlation coefficient $\rho=\Sigma_{12}/\sqrt{\Sigma_{11}\Sigma_{22}}$.

The moments of $P(\sigma)$ are easily computed and lead to the following expressions for the coefficients $f_d$:
\begin{subequations}
\begin{align}
  f_1&=\frac{2}{\nu-2}\left(\frac{\Gamma(\frac{\nu}{2})}{\Gamma(\frac{\nu-1}{2})}\right)^2&
  f_2&=\frac{\nu-2}{\nu-4}
\end{align}
\end{subequations}
when $\nu>2$ (resp. $\nu>4$).  Note that in the limit $\nu \to \infty$ at fixed $N$, the multivariate Student distribution boils down to a 
multivariate Gaussian. The shape of $\roa(\rho)$ and $\roc(\rho)$ for $\nu=5$ is given in Fig.~\ref{fig:dep}.

\begin{figure}[t!]
	\center
	\includegraphics[scale=0.4,angle=-90]{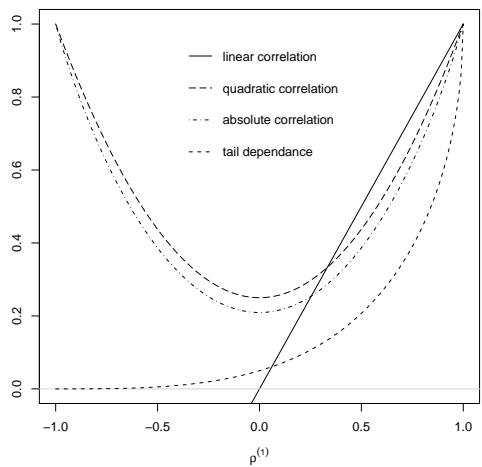}
	\caption{For all elliptical distributions, the different measures of dependence are functions of $\rho$.
	         Illustration for a pair with a bivariate Student distribution ($\nu=5$).}
	\label{fig:dep}
\end{figure}

One can explicitely compute the coefficient of tail dependence for Student variables,
which only depends on $\nu$ and on the linear correlation coefficient $\rho$. 
By symmetry, one has $\tUU(p;\nu,\rho)=\tLL(p;\nu,\rho)=\tUL(p;\nu,-\rho)=\tLU(p;\nu,-\rho)$. 
When $p = 1 - \epsilon$ with $\epsilon \to 0$, one finds:
\begin{equation}\label{eq:tau_puissance}
	\tUU(p;\nu,\rho)=\tUU^*(\nu,\rho) + \beta(\nu,\rho)\cdot \epsilon^{\frac{2}{\nu}} + \mathcal{O}(\epsilon^{\frac{4}{\nu}}),
\end{equation}
where $\tUU^*(\nu,\rho)$ and $\beta(\nu,\rho)$ are coefficients given in \ref{apx:tau_puissance}, 
see also Figs.~\ref{fig:dep}~and~\ref{fig:tau_stud}. The notable features of the above results are:
\begin{itemize}
\item The asymptotic tail dependence $\tUU^*(\nu,\rho)$ is strictly positive for all $\rho > -1$ and finite $\nu$, 
and tends to zero in the Gaussian limit $\nu \to \infty$ (see Fig.~\ref{fig:tau_stud1}). 
The intuitive interpretation is quite clear: large events are caused by large occasional bursts in volatility. 
Since the volatility is common to all assets, the fact that one return is positive extremely large is enough to infer that the volatility is itself large.
Therefore that there is a non-zero probability $\tUU^*(\nu,\rho)$ that another asset also has a large positive return 
(except if $\rho=-1$ since in that case the return can only be large and negative!). 
It is useful to note that  the value of $\tUU^*(\nu,\rho)$ does not depend on the explicit shape of $P(\sigma)$ 
provided the asymptotic tail of $P(\sigma)$ decays as $L(\sigma)/\sigma^{1+\nu}$, where $L(\sigma)$ is a slow function. 

\item The coefficient $\beta(\nu,\rho)$ is also positive, indicating that estimates of $\tUU^*(\nu,\rho)$ 
based on measures of $\tUU(p;\nu,\rho)$ at finite quantiles (e.g. $p=0.99$) are biased upwards. 
Note that the correction term is rapidly large because $\epsilon$ is raised to the power $2/\nu$. 
For example, when $\nu=4$ and $\rho=0.3$, $\beta\approx 0.263$ and one expects a first-order correction $0.026$ for $p=0.99$.
This is illustrated in Figs.~\ref{fig:tau_stud1},~\ref{fig:tau_stud2}.
The form of the correction term (in $\epsilon^{2/\nu}$) is again valid as soon as $P(\sigma)$ decays asymptotically as $L(\sigma)/\sigma^{1+\nu}$.

\item Not only is the correction large, but the accuracy of the first order expansion is very bad, 
since the ratio of the neglected term to the first correction is itself $\sim \epsilon^{2/\nu}$. 
The region where the first term is accurate is therefore exponentially small in $\nu$ --- see Fig.~\ref{fig:tau_stud2}. 
\end{itemize}

\begin{figure}[t!]
\center
    \subfigure[$\tUU$ vs. $\nu$ at several thresholds $p$ for bivariate Student variables with $\rho=0.3$. Note that $\tUU^* \to 0$ when $\nu \to \infty$, but rapidly 
    grows when $p \neq 1$.]{\label{fig:tau_stud1}\includegraphics[scale=0.4]{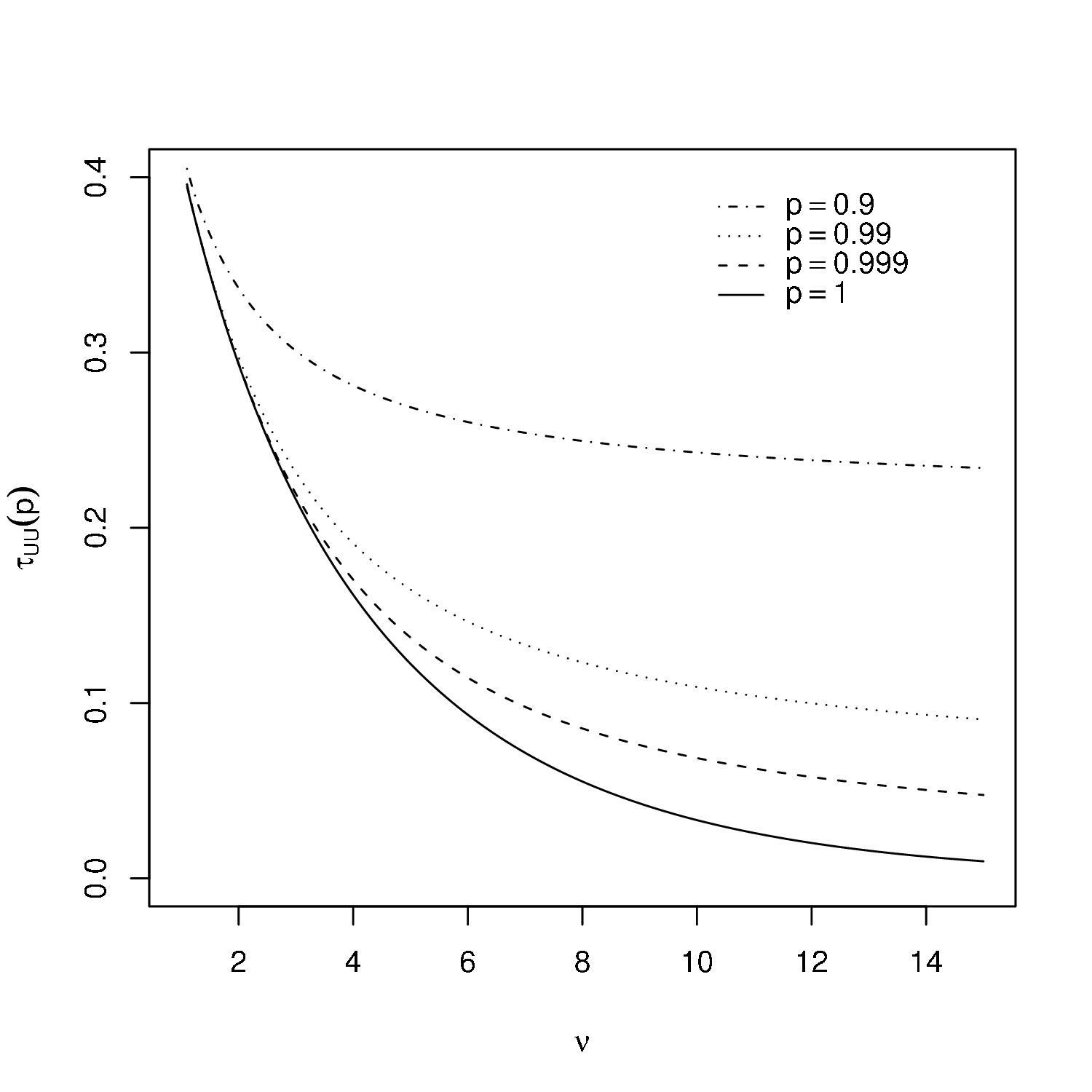}}\hfill
    \subfigure[$\tUU$ vs $p$ for bivariate Student variables with $\rho=0.3$ and $\nu=5$: exact curve (plain), first order power-law expansion (dashed) and simulated series with different lengths $T$ 
    (symbols)]{\label{fig:tau_stud2}\includegraphics[scale=0.4,trim=0 0 420 0,clip]{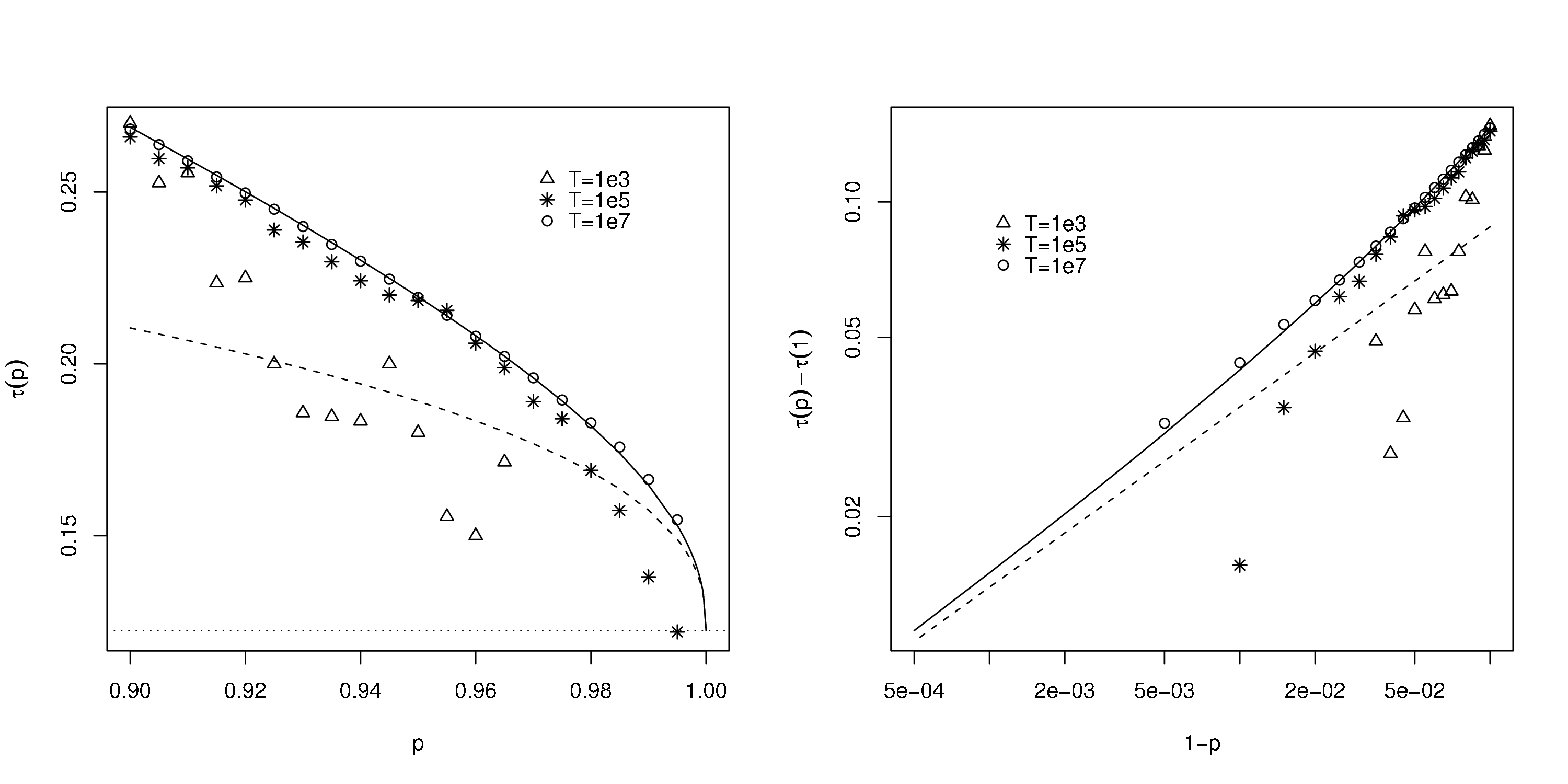}}
    \caption{}	
    \label{fig:tau_stud}
\end{figure}

Finally, we plot in Fig.~\ref{fig:dev_gauss_stud} the rescaled difference between the Student copula and the Gaussian copula 
both on the diagonal and on the anti-diagonal,
for several values of the linear correlation coefficient $\rho$ and for $\nu=5$. 
One notices that the difference is zero for $p=\tfrac{1}{2}$, as expected from the expression of $C^*(\rho)$ for general elliptical models. 
Away from $p=\tfrac{1}{2}$ on the diagonal, the rescaled difference has a positive convexity and non-zero limits when $p \to 0$ and $p \to 1$,
corresponding to $\tLL^*$ and $\tUU^*$.

\begin{figure}[ht]
\center
    \includegraphics[scale=0.4]{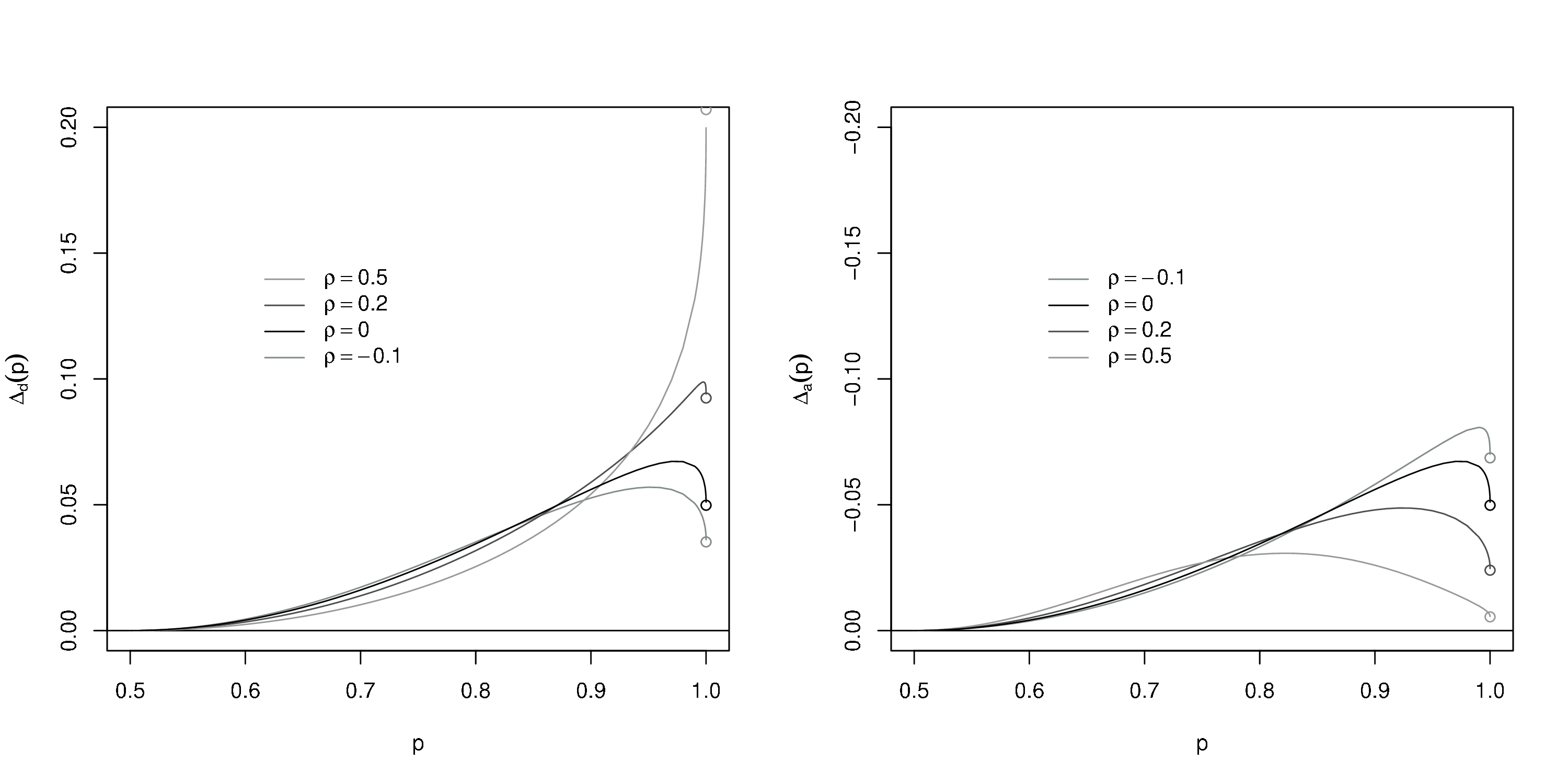}
    \caption{Diagonal and anti-diagonal of the Student copula: 
    the quantities $\Delta_d(p)$ and $\Delta_a(p)$ defined in Eq.~\eqref{eq:Delta} are plotted versus $p$ for several values of $\rho$ and fixed $\nu=5$.
    (The curves are identical in the range $p\in[0,0.5]$ due to the symmetry $p\leftrightarrow 1\!-\!p$).}	
    \label{fig:dev_gauss_stud}
\end{figure}

\subsection{The log-normal ensemble} 

If we now choose $\sigma=\sigma_0 \e^\xi$ with $\xi \sim \mathcal{N}(0,s)$ (as suggested by advocates of multifractal models \cite{calvetfisher,lux,bacrymuzy}), the resulting multivariate 
model defines the log-normal ensemble and the log-normal copula. The factors $f_d$ are immediately found to be: $f_d=\e^{d^2s^2}$ with no restrictions on $d$.
The Gaussian case now corresponds to $s \equiv 0$.

Although the inverse Gamma and the lognormal distributions are very different, it is well known that the tail region of the log-normal is in practice
very hard to distinguish from a power-law. In fact, one may write:
\begin{equation}
	\widehat \sigma^{-1} \e^{-\frac{\ln^2 \widehat \sigma}{2 s^2}} =  
	\widehat \sigma^{-(1+\nu_{\text{eff}})} \qquad \nu_{\text{eff}} = \frac{\ln \widehat \sigma}{2 s^2},
\end{equation}
where we have introduced $\widehat \sigma = \sigma/\sigma_0$. The above equation means that for large $\widehat \sigma$ a lognormal is like a power-law with a 
slowly varying exponent. If the tail region corresponds to $\widehat \sigma \in [4,8]$ (say), the effective exponent $\nu$ lies in the range $[0.7/s^2,1.05/s^2]$.
Another way to obtain an approximate dictionary between the Student model and the lognormal model is to fix $\nu_{\text{eff}}(s)$ such that the coefficient
$f_2$ (say) of the two models coincide. Numerically, this leads to $\nu_{\text{eff}}(s) \approx 2 + 0.5/s^2$, leading to $s \approx 0.4$ for $\nu=5$. 
In any case, our main point here is that from an empirical point of view, one does not expect striking differences between a Student model and a log-normal model --- even 
though from a strict mathematical point of view, the models are very different. In particular, the asymptotic tail dependence coefficients $\tUU^*$ or $\tLL^*$ are always zero
for the log-normal model (but $\tLL(p)$ or $\tUU(p)$ converge exceedingly slowly towards zero as $p \to 1$).

\subsection{Pseudo-elliptical generalisation}\label{ssec:pseudo-ell}

In the previous elliptical description of the returns, all stocks are subject to 
the exact same stochastic volatility, what leads to non-linear dependences like tail effects and residual higher-order correlations even for $\rho=0$ (see Fig.~\ref{fig:dep}).
In order to be able to fine-tune somewhat this dependence due to the common volatility, a simple generalisation is to let each stock be 
influenced by an idiosyncratic volatility, thus allowing for a more subtle structure of dependence. More specifically, we write\footnote{{
In vectorial form, we collect the individual (yet dependent) stochastic volatilities $\sigma_i$ on the diagonal of a matrix $D$, 
and $X=D\epsilon$ can be decomposed as $X=RDAU$
where $U$ is a random vector uniformly distributed on the unit hypersphere, the radial component $R$ is a chi-2 random variable independent of $U$ with 
$\operatorname{rank}(A)$ degrees of freedom, and $A$ is a matrix with appropriate dimensions such that $AA^{\dagger}=\Sigma$.
This description can be contrasted with the one proposed in Ref.~\cite{kring2009multi} under the term ``Multi-tail Generalized Elliptical Distribution'':
$X=R(\vec{u})AU$ with $R$ now depending on the unit vector $\vec{u}=AU/||AU||$.
In other words, the description in $\eqref{eq:pseudo_ell_coeff}$ provides a different radial amplitude for each component, whereas \cite{kring2009multi}
characterises a direction-dependent radial part identical for every component.
The latter allows for a richer phenomenology than the former, but lacks financial intuition.}}: 
\begin{equation}\label{eq:pseudo_ell_coeff}
	X_i=\sigma_i\cdot\epsilon_i\qquad{}i=1\ldots{}N
\end{equation}
where the Gaussian residuals $\epsilon_i$ have the same joint distribution as before, and are independent of the $\sigma_i$, 
%
%
but we now generalise the definition of the ratios $f_d$:
\begin{equation}\label{eq:fd_general}
f_d^{\scriptscriptstyle (i,j)}=\frac{\mathds{E}[\sigma_i^d\sigma_j^d]}{\mathds{E}[\sigma_i^d]\mathds{E}[\sigma_j^d]}
\end{equation}
which describe the joint distribution of the volatilities. As an explicit example, we consider the natural generalisation of the log-normal 
model and write $\sigma_i = \sigma_{0i} \e^{\xi_i}$, with\footnote{A further generalisation that allows for stock dependent ``vol of vol'' $s_i$ is also
possible.} $\xi_i\stackrel{\text{\tiny id}}{\sim}\mathcal{N}(0,s)$, and some correlation structure of the $\xi$'s: $\mathds{E}[\xi_i \xi_j]=s^2 c_{ij}$. One then finds:
\begin{equation}\label{eq:fd_lognorm}
f_d(c)=\e^{d^2 s^2 c}.
\end{equation}

Within this setting, the generalisation of coefficients \eqref{eq:coeffs_ell} can be straightforwardly calculated. 
Denoting by $r$ the correlation coefficient of $\epsilon_i$ and $\epsilon_j$ 
(now different from $\rho$) we find:
\begin{subequations}\label{eq:coeffs_pseudo_ell}
\begin{align}
	\rho(r,c)&=\frac{f_1(c)}{f_1(1)}\cdot{}r\\
	\roc(r,c)&=\frac{f_2(c)(1+2r^2)-1}{3f_2(1)-1}\\
	\roa(r,c)&=\frac{f_1(c)D(r)-1}{\frac{\pi}{2}f_1(1)-1}\\
	C^*(r,c)&=\frac{1}{4}+\frac{1}{2\pi}\arcsin r\quad\forall{}c
\end{align}
\end{subequations}
These formulas are straightforwardly generalised for arbitrary description of the volatilities, using the coefficients $f_d$ defined in
\eqref{eq:fd_general} instead of the explicit expressions given by \eqref{eq:fd_lognorm}.
When $c$ is fixed (e.g.\ for the elliptical case $c=1$), $\rho$ and $r$ are proportional, and 
all measures of non-linear dependences can be expressed as a function of $\rho$. But this ceases to be true as soon
as there exists some non trivial structure $c$ in the volatilities. In that case, $c$ and $r$ are ``hidden'' underlying variables, that
can only be reconstructed from the knowledge of $\rho$, $\roc$, $C^*$, assuming of course that the model is accurate.

Notice that the result on $C(\frac12,\frac12)$ is totally independent of the structure of the volatilities\footnote{ 
This property holds even when $\sigma_i$ depends on the sign of $\epsilon_i$, which might be useful to model the leverage
effect that leads to some asymmetry between positive and negative tails, as the data suggests.}.
Indeed, what is relevant for the copula at the central point is not the amplitude of the returns, 
but rather the number of $+/-$ events, 
which is unaffected by any multiplicative scale factor as long as the median of the univariate marginals is nil. 
An important consequence of this result is that for all elliptical or pseudo-elliptical model,
$\rho=0$ implies that \mbox{$C^*(\rho=0)=\frac14$}. 
We now turn to empirical data to test the above predictions of elliptical models, in particular this last one.

\subsection{A word on Archimedean copulas}

In the universe of all possible copulas, a particular family has become increasingly popular in finance: that of 
``Archimedean copulas''. These copulas are defined, for bivariate random vectors, as follows \cite{wuvaldez}:  
  \be
  C_A(u,v) \equiv  \phi^{-1} \left[\phi(u) + \phi(v)\right],
  \ee
  where $\phi(u): [0,1] \to [0,1]$ is a function such that $\phi(1)=0$ and $\phi^{-1}$ is decreasing and completely monotone. 
  For example, Frank copulas are such that $\phi_F(u) = \hbox{$\ln [\e^\theta - 1]$} - \hbox{$\ln [\e^{\theta u} - 1]$}$ where $\theta$ is a real parameter, or
  Gumbel copulas, such that $\phi_G(u) = (-\ln u)^\theta$, $0 < \theta \leq 1$. The asymptotic coefficient of tail dependence are all zero for
  Frank copulas, whereas $\tUU^*=2-2^\theta$ (and all other zero) for the Gumbel copulas. The case of general multivariate copulas is obtained 
  as natural generalisation of the above definition.
  
  One can of course attempt to fit empirical copulas with a specific Archimedean copula. By playing enough with the function $\phi$, it is obvious that
  one will eventually reach a reasonable quality of fit. What we take issue with is the complete lack of intuitive interpretation or plausible 
  mechanisms to justify why the returns of two correlated assets should be described with Archimedean copulas. This is particularly clear after 
  recalling how two Archimedean random variables are generated: first, take two $\mathcal{U}[0,1]$ random variables $s,w$. Set $w'=K^{-1}(w)$ with $K(x)=x-\phi(x)/\phi'(x)$.
  Now, set:
  \be
  u = \phi^{-1} \left[s\phi(w')\right]; \qquad v = \phi^{-1} \left[(1-s)\phi(w')\right],
  \ee
  and finally write $X_1= {\mathcal P}_{\negmedspace{\scriptscriptstyle <},1}^{-1}(u)$ and $X_2= {\mathcal P}_{\negmedspace{\scriptscriptstyle <},2}^{-1}(v)$ 
  to obtain the two Archimedean returns with the required marginals ${\mathcal P}_1$ and ${\mathcal P}_2$ \cite{genest1993statistical,wuvaldez}.
  Unless one finds a natural economic or microstructural interpretation for such a labyrinthine construction, 
  we content that such models should be discarded {\it a priori}, for lack of plausibility.
  In the same spirit, one should really wonder why the financial industry has put so much faith in Gaussian copulas
  models to describe correlation between {\it positive} default times, that were then used to price CDO's and other credit derivatives. 
  We strongly advocate the need to build models bottom-up: mechanisms should come before any mathematical representation
  (see also \cite{mikosch2006copulas} and references therein for a critical view on the use of copulas).

\section{Empirical study of the dependence of stock pairs}\label{sec:empirics}

\subsection{Methodology: what we do and what we do not do}

As the title stresses, the object of this study is to dismiss the elliptical copula as description
of the multivariate dependence structure of stock returns.
Concretely, the null-hypothesis ``$\text{H}_0$: the joint distribution is elliptical''
admits as corollary ``all pairwise bivariate marginal copulas are elliptical and differ only by their linear correlation coefficient ''.
In other words, all pairs with the same linear correlation $\rho$ are supposed to have identical values of non-linear dependence measures, 
and this value is predicted by the elliptical model. Focusing the empirical study on the pairwise measures of dependence, we will reject $\text{H}_0$ by
showing that their average value over all pairs with a given $\rho$ is different from the value predicted by elliptical models.

Our methodology differs from usual hypothesis testing using statistical tools and goodness of fit tests, as can be encountered for example in \cite{malevergne2003testing} 
for testing the Gaussian copula hypothesis on financial assets. Indeed, the results of such tests are often not valid because financial time series are \emph{persistent} (although almost not linearly autocorrelated),
so that successive realisations cannot be seen as independent draws of an underlying distribution, see the recent discussion on this issue in \cite{chicheportiche2011goodness}.

\subsection{Data set and time dependence}

The dataset we considered is composed of daily returns of 1500 stocks labeled in USD in the period 1995--2009 (15 full years). 
We have cut the full period in three subperiods (1995--1999, 2000--2004 and 2005--2009), and also the stock universe into three
pools (large caps, mid-caps and small caps). We have furthermore extended our analysis to the Japanese stock markets. Qualitatively,
our main conclusions are robust and do not depend neither on the period, nor on the capitalisation or the market. Some results, on the
other hand, do depend on the time period, but we never found any strong dependence on the capitalisation. 
All measures of dependence are calculated pairwise with non-parametric estimators, and
using all trading dates shared by both equities in the pair.

We first show the evolution of the linear correlation coefficients (Fig.~\ref{fig:rho_evol}) and the upper and lower tail dependence coefficients for $p=0.95$ (Fig.~\ref{fig:tau_evol}) as a function of
time for the large-cap stocks. One notes that (a) the average linear correlation ${\overline \rho}(t)$ fluctuates quite a bit, from around $0.1$ in the mid-nineties to around $0.5$ during 
the 2008 crisis; (b) the distribution of correlation coefficients shifts nearly rigidly around the moving average value ${\overline \rho}(t)$; (c) there is a marked, time 
dependent asymmetry between the average upper and lower tail dependence coefficients; (d) overall, the tail dependence tracks the behaviour of ${\overline \rho}(t)$. More
precisely, we show the time behaviour of the tail dependence assuming either a Gaussian underlying copula or a Student ($\nu=5$) copula with the same average correlation ${\overline \rho}(t)$.
Note that the former model works quite well when ${\overline \rho}(t)$ is small, whereas the Student model fares better when ${\overline \rho}(t)$ is large. We will repeatedly come back to this point below.

\begin{figure}[p]
\center
\includegraphics[scale=0.37,clip]{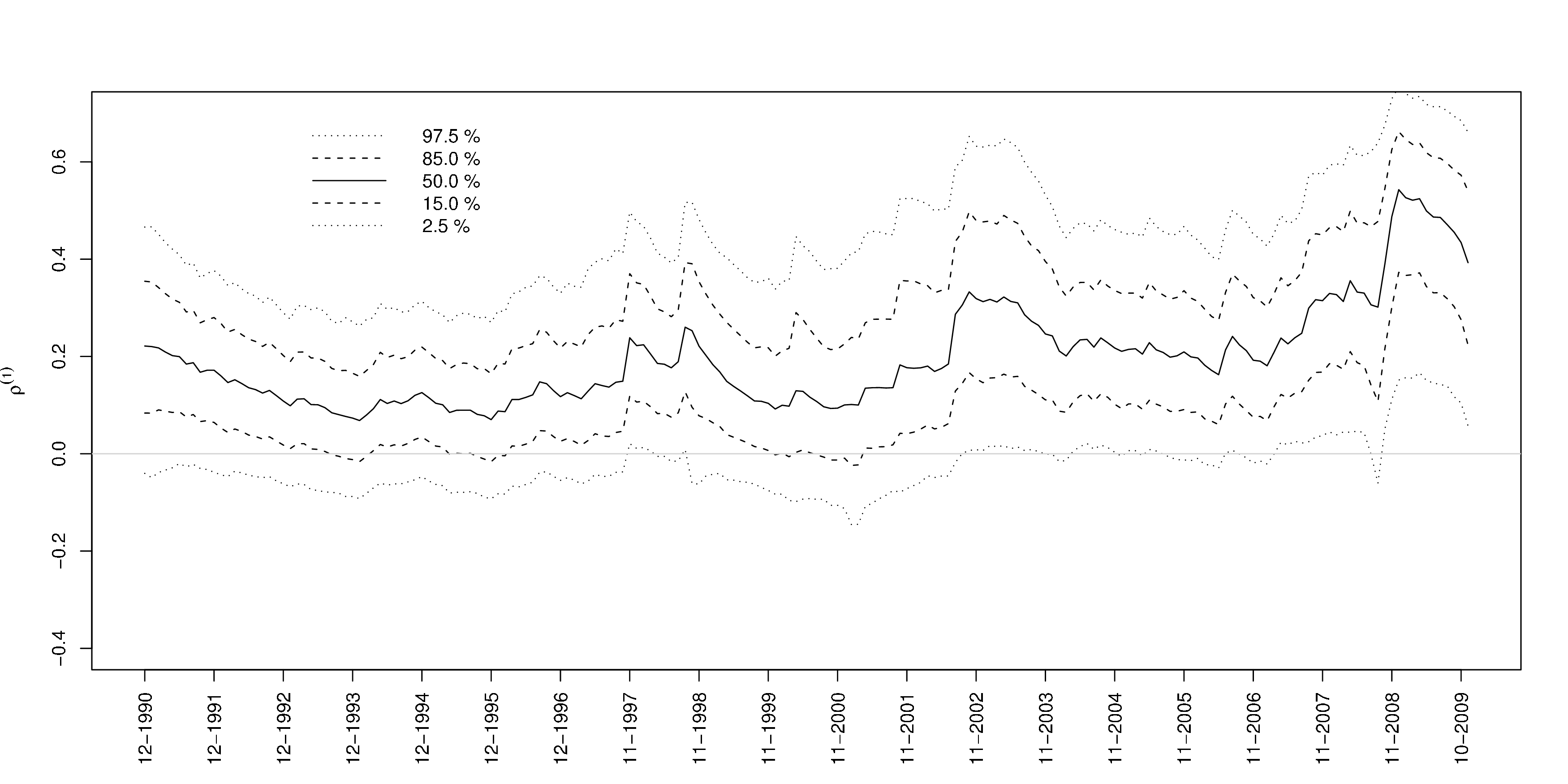}
\caption{Time evolution of the coefficient of linear correlation,
         computed for the stocks in the S\&P500 index, with an exponentially moving average of 125 days
         from January 1990 to December 2009.
         Five quantiles of the $\rho$ distribution are plotted according to the legend, showing that this distribution moves quasi-rigidly around its median value.}
\label{fig:rho_evol}
%
\vfill
\includegraphics[scale=0.37,clip]{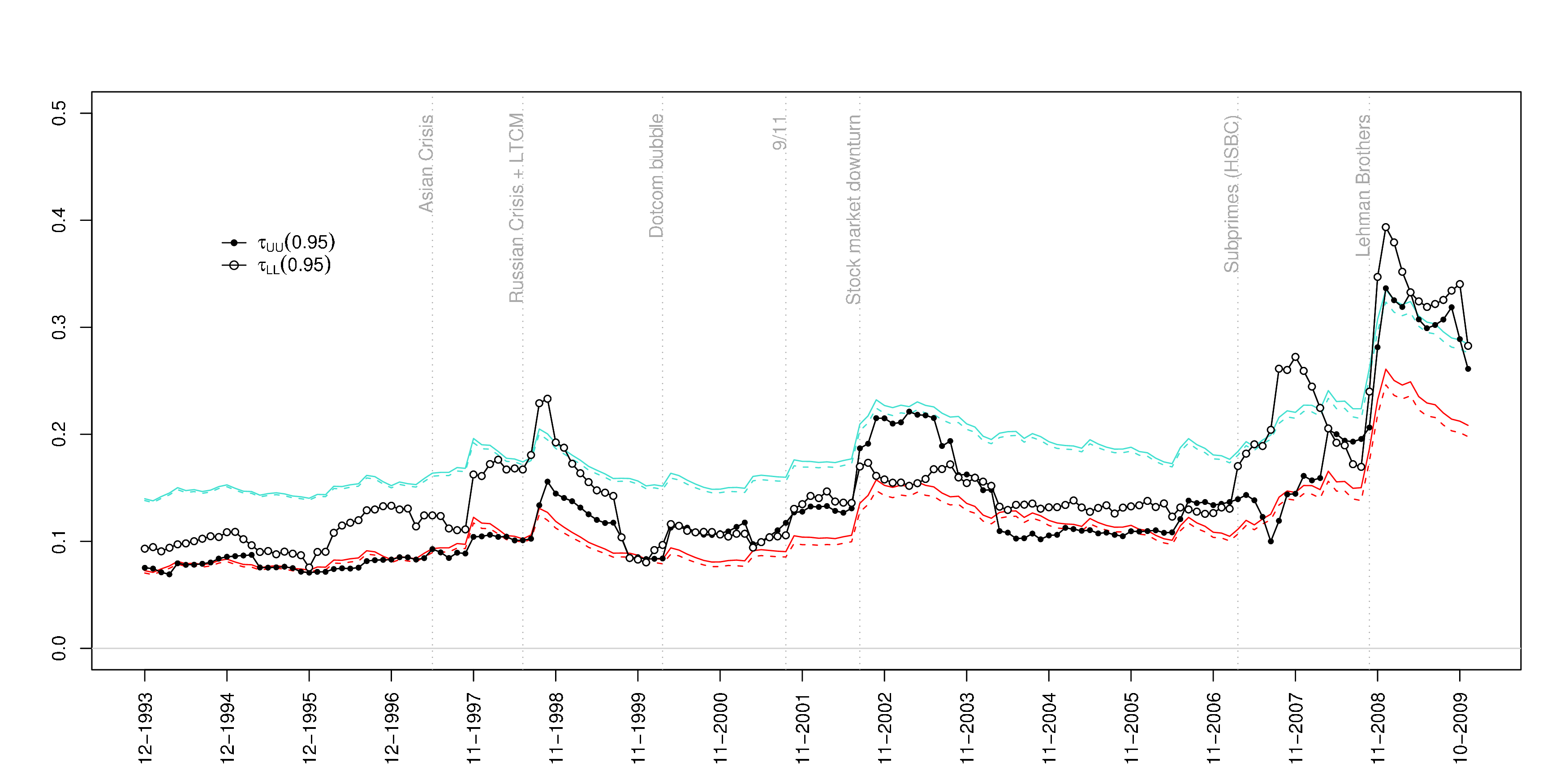}
\caption{Time evolution of the coefficient of tail dependence at $p=0.95$,
         computed for the stocks in the S\&P500 index. 
         A sliding flat window of 250 days moves monthly (25 days) from January 1990 to December 2009.
         We depict in turquoise and red the predictions for Student pair with $\nu=5$ and $\nu=\infty$ respectively, 
         based on the evolution of the mean value $\bar\rho(t)$ of the cross-sectional linear correlation (dashed) or averaged over the full instantaneous distribution of $\rho(t)$ (plain), 
         with little difference.}
\label{fig:tau_evol}
\end{figure}

We computed the quadratic $\roc$ and absolute $\roa$ correlation coefficients for each pair in the pool,
as well as the whole rescaled diagonal and anti-diagonal of the empirical copulas 
(this includes the coefficients of tail dependence\footnote{
The empirical relative bias for $p=0.95$ is $\lesssim{}1\%$ even for series with only $10^3$ points --- typically daily returns over 4 years.},
see Sect.~\ref{ssec:copulas} above).
We then average these observables over all pairs with a given linear coefficient $\rho$, within bins of varying width in order to take account of the frequency of observations in each bin. 
We show in Fig.~\ref{fig:all_vs_rho} $\roa$ and $\tUU(0.95), \tLL(0.95)$ as a function of $\rho$ for all stocks in the period 2000--2004, 
together with the prediction of the Student copula model for various values of $\nu$, 
including the Gaussian case $\nu = \infty$. 
For both quantities, we see that the empirical curves systematically cross the Student predictions, 
looking more Gaussian for small $\rho$'s and compatible with Student $\nu \approx 6$ for large $\rho$'s, 
echoing the effect noticed in Fig.~\ref{fig:tau_evol} above. 
The same effect would appear if one compared with the log-normal copula: 
elliptical models imply a residual dependence due to the common volatility, 
\emph{even when the linear correlation goes to zero}. 
This property is not observed empirically, since we rather see that higher-order correlations almost vanish together with $\rho$. 
The assumption that a common volatility factor affects all stocks is therefore too strong,
although it seems to make sense for pairs of stocks with a sufficiently large linear correlation. 
This result is in fact quite intuitive and we will expand on this idea in Sect.~\ref{ssec:pseudo_ell_log} below. 

The above discrepancies with the predictions of elliptical models are qualitatively the same for all periods, market caps and is also found for the Japan data set.
Besides, elliptical models predict a symmetry between upper and lower tails, whereas the data suggest that the tail dependence is asymmetric. Although most of 
the time the lower tail dependence is stronger, there are periods (such as 2002) when the asymmetry is reversed.

\begin{figure}[t!h]\center
    \subfigure[\mbox{$\roa$       vs $\rho$}]{\label{fig:abscor_rho_data}\includegraphics[scale=0.4]{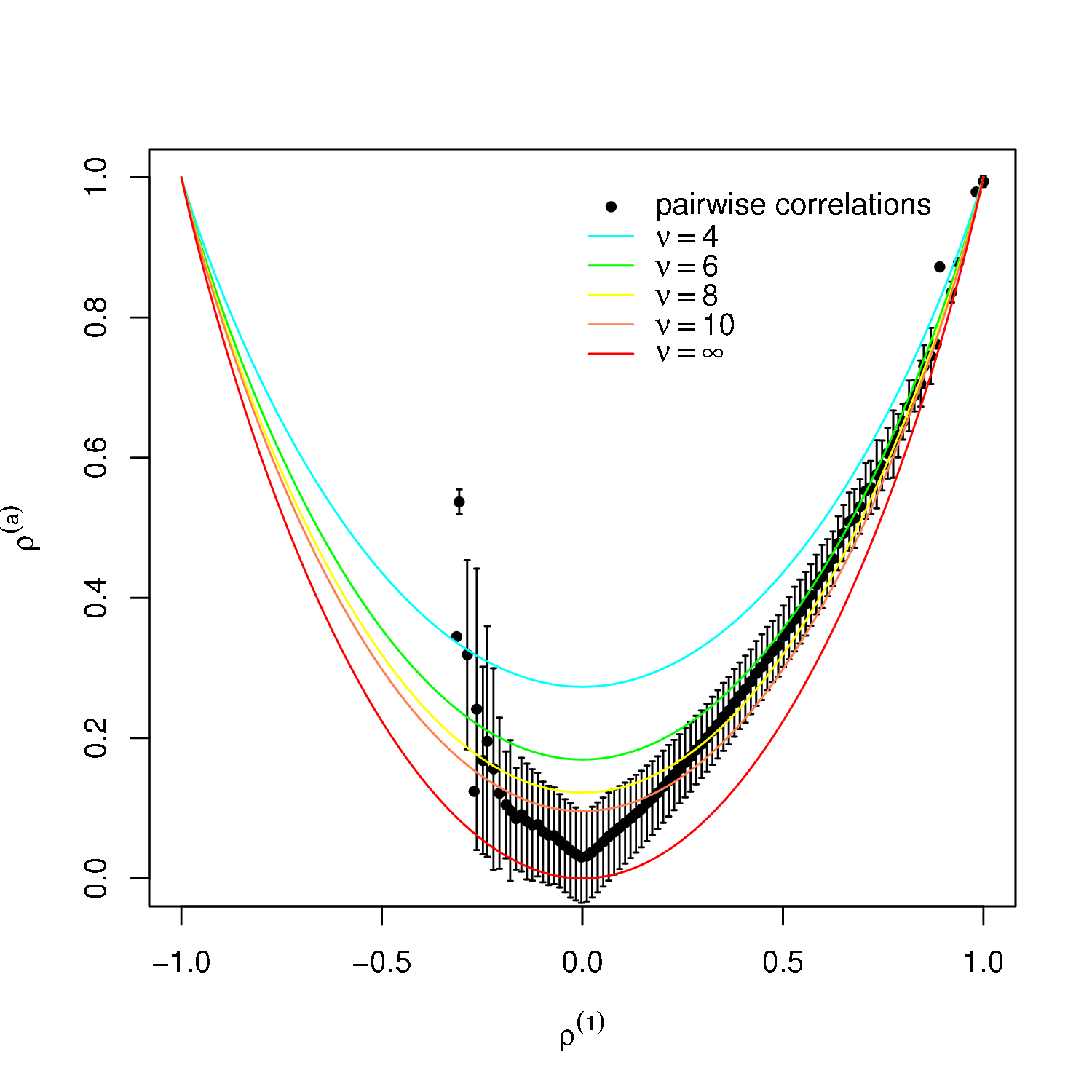}}\hfill
    \subfigure[\mbox{$\tau(0.95)$ vs $\rho$}]{\label{fig:tau_rho_data   }\includegraphics[scale=0.4]{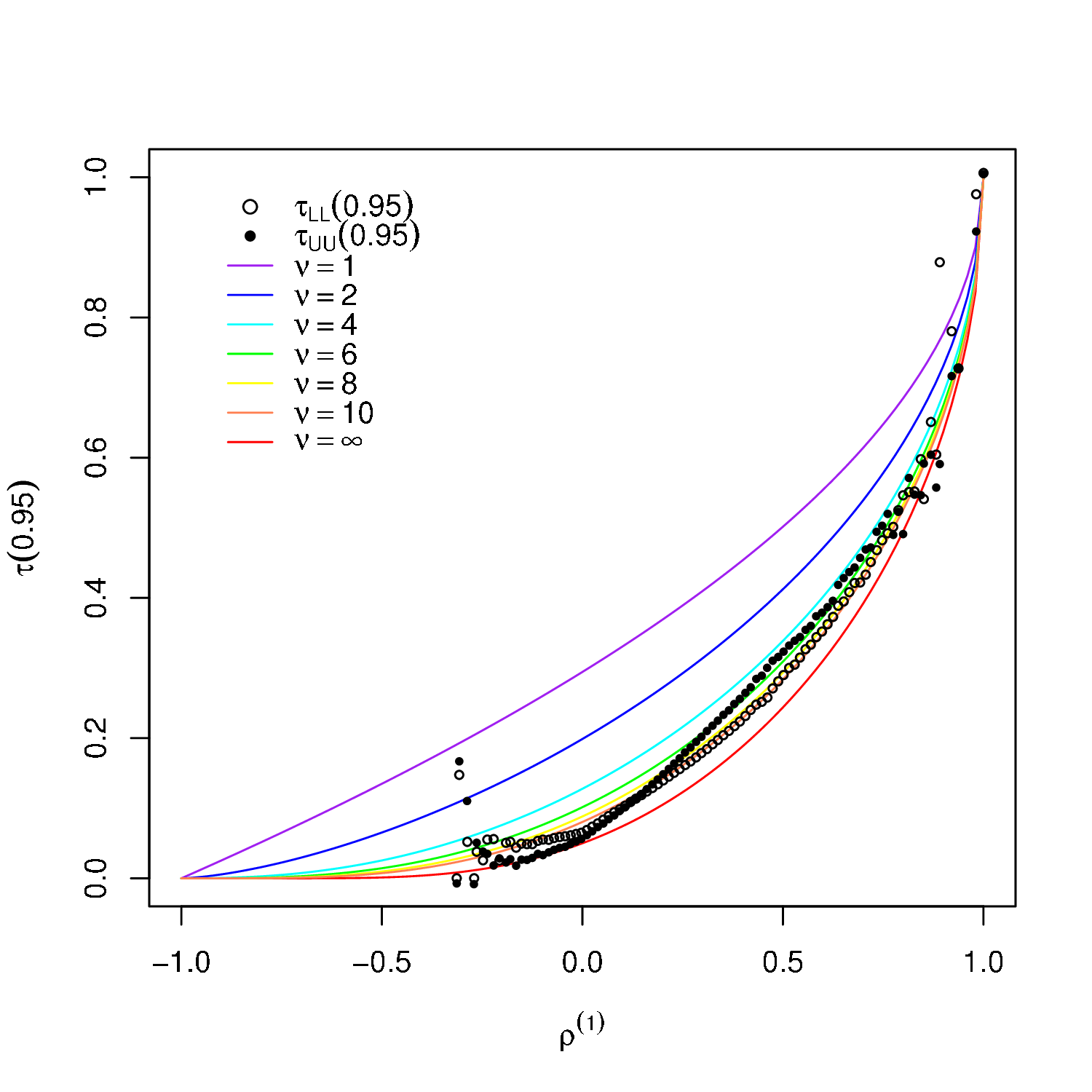}}
    \caption{Empirical (2000--2004) and elliptical. The fact that the empirical points ``cross'' the theoretical lines
  corresponding to elliptical models reveals that there is a dependence structure in the volatility, more complex
  than what is suggested by a single common stochastic scale factor.}	
  \label{fig:all_vs_rho}
\end{figure}

\subsection{Copula diagonals}

As argued in Sect.~\ref{ssec:copulas} above, it is convenient to visualise the rescaled difference $\Delta_{d,a}(p)$ 
between the empirical copula and the Gaussian copula, along the diagonal $(p,p)$ and the anti-diagonal $(p,1-p)$, see Eqs.~\eqref{eq:Delta}. 
The central point $p=\frac12$ is special; since $C_G(\frac12,\frac12)=\frac14$, $\Delta_{d,a}(\frac12)$
are both equal to $4 C^* - 1$, that takes a universal value for all elliptical models, given by Eq.~\eqref{eq:C05_ell}. 

We show in Fig.~\ref{fig:cop_dev} the diagonal and anti-diagonal copulas for all pairs of stocks in the period 2000-2004, 
and for various values of $\rho$. 
We also show the prediction of the Student $\nu=5$ model and of a Frank copula model with Student $\nu=5$ marginals and the appropriate value of $\rho$. 
What is very striking in the data is that $\Delta_d(p)$ is concave for small $\rho$'s and becomes convex for large $\rho$'s, 
whereas the Student copula diagonal is {\it always convex}. 
This trend is observed for all periods, and all caps, and is qualitatively very similar in Japan as well for all periods between 1991 and 2009. 
We again find that the Student copula is a reasonable representation of the data only for large enough $\rho$. 
The Frank copula is always a very bad approximation -- see the wrong curvature along the diagonal and the inaccurate behaviour along the anti-diagonal.

Let us now turn to the central point of the copula, $C^*=C(\frac12,\frac12)$. 
We plot in Fig.~\ref{fig:C05} the quantity $-\cos\left(2\pi C^*\right)-\rho$ as a function of $\rho$, 
which should be zero for all elliptical models, according to Eq.~\eqref{eq:C05_ell}. 
The data here include the 284 equities constantly member of the S\&P500 index in the period {2000--2009}. 

We again find a clear systematic discrepancy, that becomes stronger for smaller $\rho$'s: 
the empirical value of $C^*$ is too large compared with the elliptical prediction. 
In particular, for stocks with zero linear correlation ($\rho = 0$), we find $C^* > \frac14$, 
i.e. even when two stocks are uncorrelated, the probability that they move in the same direction\footnote{
A more correct statement is that both stocks have returns below their median with probability larger than $\frac14$. 
However, the median of the distributions are very close to zero, justifying our slight abuse of language.} is larger than $\frac14$. 
The bias shown in Fig.~\ref{fig:C05} is again found for all periods and all market caps, and for Japanese stocks as well. 
Only the amplitude of the effect is seen to change across the different data sets.

Statistical errors in each bin are difficult to estimate since pairs containing the same asset are mechanically correlated. 
In order to ascertain the significance of the previous finding, we have compared the empirical value of $C^*(\rho)$ with the result of a numerical
simulation, where we generate time series of Student ($\nu=5$) returns using the empirical correlation matrix. 
In this case, the expected result is that all pairs with equal correlation have the same bivariate copula and thus the same $C^*$ 
so that the dispersion of the results gives an estimate of measurement noise. We find that, as expected, $C^*(\rho)$ is compatible with Eq.~\eqref{eq:C05_ell},
at variance with empirical results.
We also find that the dispersion of the empirical points is significantly larger than that of the simulated elliptical pairs with identical linear correlations,
suggesting that all pairs {\it cannot be described by the same bivariate copula} (and definitively not an elliptical copula, as argued above).

All these observations, and in particular the last one, clearly indicate that Student copulas, or any elliptical copulas, are inadequate to represent the full 
dependence structure in stock markets.
Because this class of copulas has a transparent interpretation, we in fact know why this is the case: assuming a common random volatility 
factor for all stocks is oversimplified. This hypothesis is indeed only plausible for sufficiently correlated stocks, in agreement with the set of observations we made above. As a
first step to relax this hypothesis, we now turn to the pseudo-elliptical log-normal model, that allows further insights but still has unrecoverable failures.

\begin{figure}[tp]
	\center
  	\subfigure[]{\label{fig:C05a}\includegraphics[scale=0.25,clip]{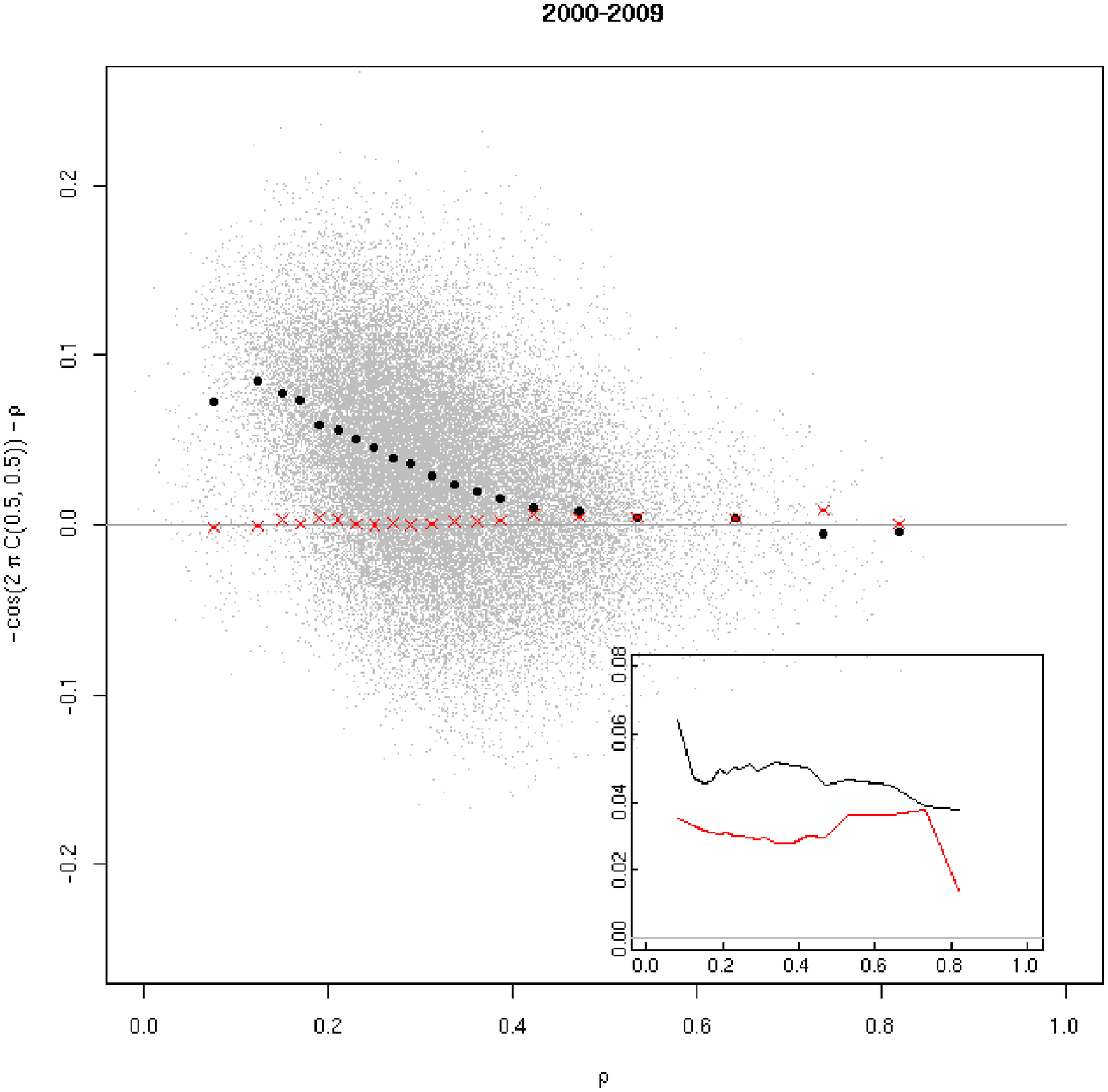}}\hfill
  	\subfigure[]{\label{fig:C05b}\includegraphics[scale=0.25,clip]{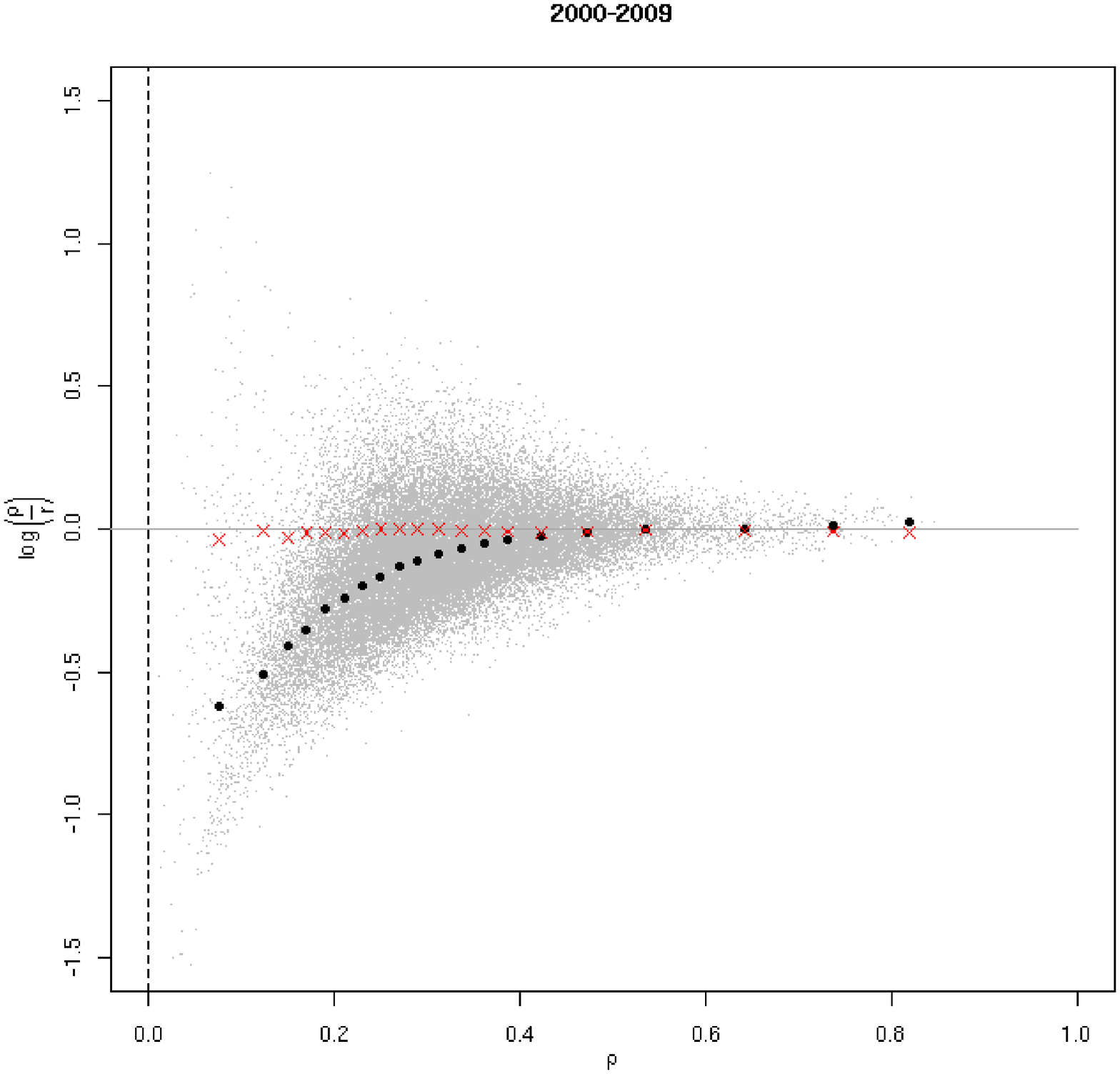}}
	\caption{{\textbf{Left:} $-(\cos\left(2\pi C^*\right)+\rho)$ versus the coefficient $\rho$ of linear correlation.
	The grey cloud is a scatter plot of all pairs of stocks.
	Black dots correspond to the average value of empirical measurements within a correlation bin,
	and red crosses correspond to the numerically generated elliptical (Student, $\nu=5$) data, compatible with the 
	expected value zero. 
	Statistical error are not shown, but their amplitudes is of the order of the fluctuations of the red crosses, 
	so that empirics and elliptical prediction don't overlap for $\rho\lesssim 0.4$.
	This representation allows to visualise very clearly the systematic discrepancies for small values of $\rho$, beyond the average value.
	Inset: the 1\,s.d. dispersion of the scattered points inside the bins. 
	\textbf{Right:} 
	\mbox{$z=\ln \left(\rho/|\cos\left(2\pi C^*\right)|\right)$} as a function of $\rho$, with the same data and the same conventions. 
	This shows more	directly that the volatility correlations of weakly correlated stocks is overestimated by elliptical models, whereas strongly 
	correlated stocks are compatible with the elliptical assumption of a common volatility factor.}}
	\label{fig:C05}
\end{figure}

\begin{figure}[!p!]
  \center
  \includegraphics[scale=0.4,clip]{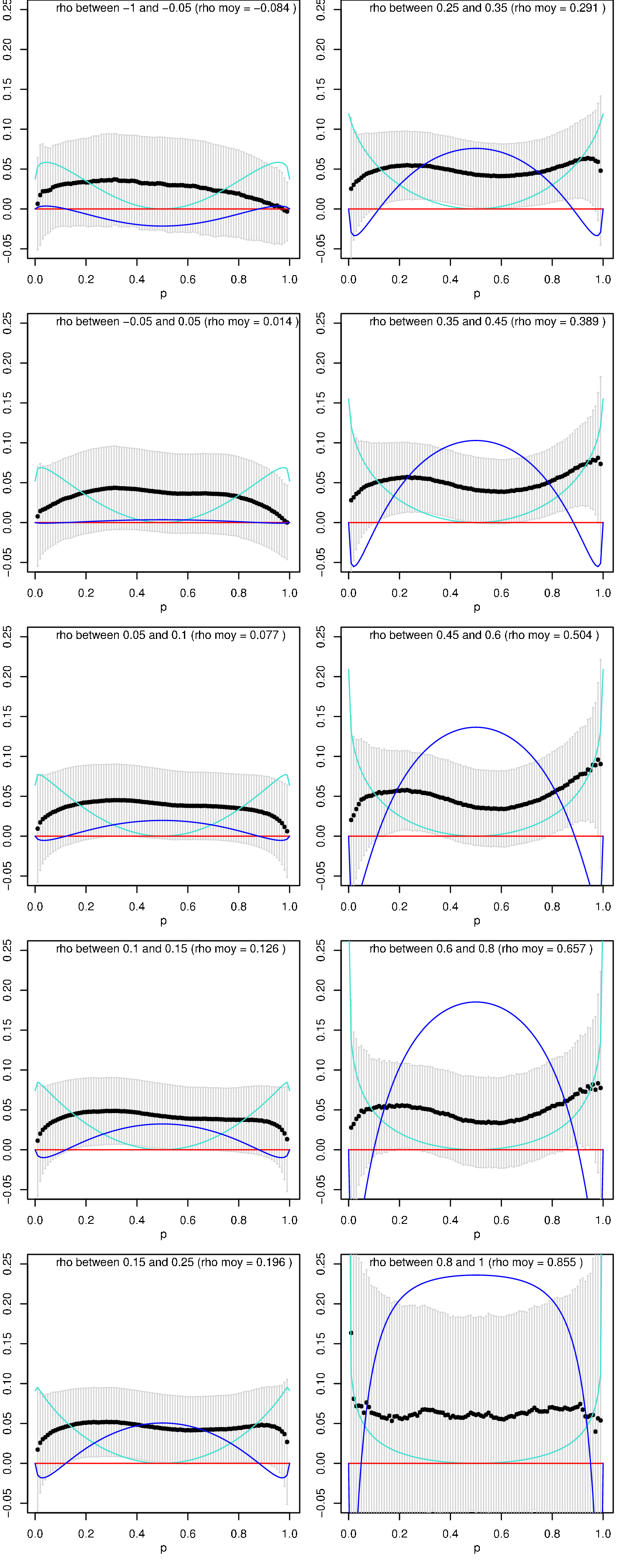}\hfill
  \includegraphics[scale=0.4,clip]{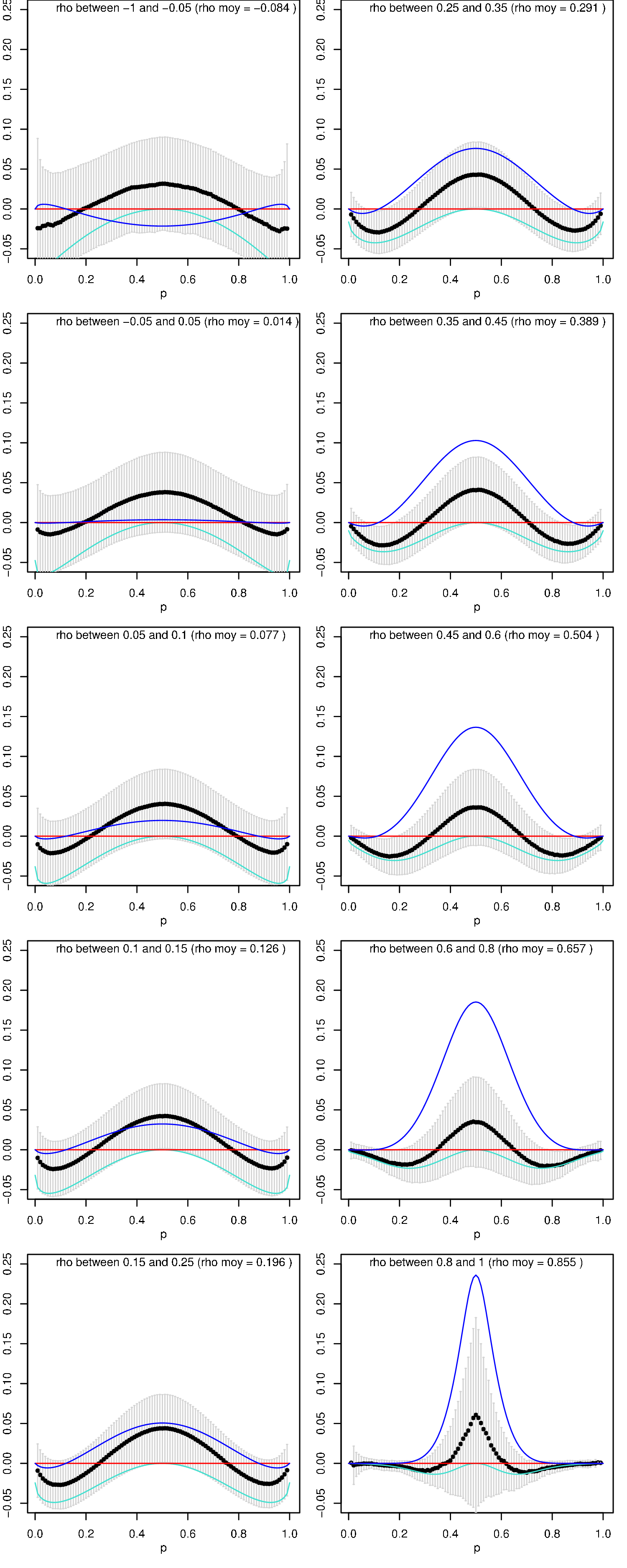}
  \caption{Diagonal (left) and anti-diagonal (right) of the copula vs $p$.
        Real data (2000-2004) averaged in 10 bins of $\rho$. The vertical lines corresponds to 1-sigma dispersion of the results (i.e. not 
        the statistical error bar).
	The turquoise line corresponds to an elliptical model (Student with $\nu=5$):
	it predicts too low a value at the centre point and too large tail dependences (left and right limits).
	The blue line depicts the behaviour of Frank's copula.
	It is worth noticing that the tail dependence coefficient is always measured empirically at some $p<1$, for example $p=0.95$.
	In this case, the disagreement between data and an elliptical model may be accidentally small.
	It is important to distinguish the interpretation of the limit coefficient from that of a penultimate value [6].}
  \label{fig:cop_dev}
\end{figure}

\subsection{Pseudo-elliptical log-normal model}\label{ssec:pseudo_ell_log}

As we just showed, the fact that the central value of the copula $C^*$ does not obey the relation 
$-\cos \left(2\pi C^*\right) = \rho$ rules out all elliptical models. 
One possible way out is to consider a model where random volatilities are stock dependent, as explained in Sect.~\ref{ssec:pseudo-ell}. 
Choosing for simplicity a log-normal model of correlated volatilities and inserting \eqref{eq:fd_lognorm} into \eqref{eq:coeffs_pseudo_ell}, the new prediction is:
\be
\rho = - \e^{s^2(c - 1)} \cos\left(2\pi C^*\right),
\ee
where $c$ is the correlation of the log-volatilities. 
This suggests to plot \mbox{$z=\ln \left(\rho/|\cos\left(2\pi C^*\right)|\right)$} as a function of $\rho$, as shown in Fig.~\ref{fig:C05b}. 
For a purely elliptical model, $z$ should be identically zero, corresponding to perfectly correlated volatilities ($c=1$).
What we observe in Fig.~\ref{fig:C05b}, on the other hand, 
is that $c$ is indeed close to unity for large enough $\rho$'s, 
but systematically decreases as $\rho$ decreases; 
in other words, the volatilities of weakly correlated stocks are themselves weakly correlated. 
This is, again, in line with the conclusion we reached above. 

However, this pseudo-elliptical model still predicts that $C^* = \frac14$ for stocks with zero linear correlations, 
in disagreement with the data shown in Fig.~\ref{fig:C05a}. 
This translates, in Fig.~\ref{fig:C05b}, into a negative divergence of $z$ when $\rho \to 0$. 
This finding means that we should look for other types of constructions. 
How can one have at the same time $\rho = 0$ and $C^* > \frac14$ ? 
A toy-model, that serves as the basis for a much richer model that we will report elsewhere \cite{remi}, is the following. 
Consider two independent, symmetrically distributed random factors $\psi_1$ and $\psi_2$ with equal volatilities, 
and construct the returns of assets $1$ and $2$ as:
\be
X_{1,2} = \psi_1 \pm \psi_2.
\ee
Clearly, the linear correlation is zero. 
Using a cumulant expansion, one easily finds that to the kurtosis order, the central value of the copula is equal to:
\be
C^* = \frac14 + \frac{\kappa_2 - \kappa_1}{24 \pi} + \dots
\ee
Therefore, if the kurtosis $\kappa_1$ of the factor to which both stocks are positively exposed is smaller than the kurtosis $\kappa_2$ of the spread, 
one does indeed find $C^*(\rho=0) > \frac14$. 

We will investigate in a subsequent paper \cite{remi} additive models such as the above one, with random volatilities affecting the factors $\psi_1$ and $\psi_2$, that
generalise elliptical models in a way to capture the presence of several volatility modes.

\section{Conclusion}\label{sec:conclusion}

The object of this paper was to discuss the adequacy of Student copulas, or more generally elliptical copulas, 
to describe the multivariate distribution of stock returns. 
We have suggested several ideas of methodological interest to efficiently visualise and  compare different copulas. 
We recommend in particular the rescaled difference with the Gaussian copula along the diagonal and 
the central value of the copula as strongly discriminating observables. 
We have studied the dependence of these quantities, as well as other non-linear correlation coefficients, 
with the linear correlation coefficient $\rho$, and made explicit the predictions of elliptical models that can be empirically tested. 
We have shown, using a very large data set, with the daily returns of 1500 US stocks over 15 years, 
that elliptical models fail to capture the detailed structure of stock dependences. 

In a nutshell, the main message elicited by our analysis is that Student copulas provide a good approximation 
to describe the joint distribution of strongly correlated pairs of stocks, but badly miss their target for weakly correlated stocks. 
We believe that the same results hold for a wider class of assets: 
it is plausible that highly correlated assets do indeed share the same risk factor. 
Intuitively, the failure of elliptical models to describe weakly correlated assets can be traced to the 
inadequacy of the assumption of a single ``market''  volatility mode.
We expect that exactly as for returns, several factors are needed to capture sectorial volatility modes and idiosyncratic modes as well. 
The precise way to encode this idea into a workable model that would naturally generalise elliptical models and faithfully capture the non-linear, 
asymmetric dependence in stock markets is at this stage an open problem, on which we will report in an upcoming paper. 
We strongly believe that such a quest cannot be based on a formal construction of mathematically convenient models 
(such as Archimedean copulas that, in our opinion, cannot be relevant to describe asset returns). 
The way forward is to rely on intuition and plausibility and come up with models that make financial sense. 
This is, of course, not restricted to copula modeling, but applies to all quarters of quantitative finance.

\appendix
\section{Proof of Eqn.~\eqref{eq:tau_puissance}, page~\pageref{eq:tau_puissance}}\label{apx:tau_puissance}

\begin{lemma}
Let $(X_1,X_2)$ follow a bivariate student distribution with $\nu$ degrees of freedom and correlation $\rho$.
Denote by $T_{\nu}(x)$ the univariate Student cdf with $\nu$ degrees of freedom,
and $x_p\equiv{}T_{\nu}^{-1}(p)$ its $p$-quantile, and define
\begin{subequations}\label{eq:K_def1}
\begin{align}
	K   &=\sqrt{\frac{\nu+1}{1-\rho^2}}\frac{X_1-\rho{}x_p}{\sqrt{\nu+X_2^2}}\\
	k(p)&=\frac{\sqrt{\nu+1}\sqrt{1-\rho}}{\sqrt{1+\rho}}\left(\nu\cdot{}x_p^{-2}+1\right)^{-\frac{1}{2}}
	     =\frac{k(1)}{\sqrt{1+\frac{\nu}{x_p^{2}}}}
\end{align}
\end{subequations}

Then, 
\[
	\mathds{P}[K\leq k(p)\mid X_2=x_p]=T_{\nu+1}\big(k(p)\big)
\]
\end{lemma}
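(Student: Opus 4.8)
The plan is to obtain the conditional law of $X_1$ given $X_2=x_p$ explicitly from the bivariate Student density and to recognise it, after the standardisation $K$, as a univariate Student law with $\nu+1$ degrees of freedom. The guiding remark is that at fixed $X_2$ the quantity $K$ is an increasing affine function of $X_1$, so the event $\{X_1\le x_p\}$ is identical to $\{K\le k(p)\}$ once $X_2=x_p$, where $k(p)$ is simply the value taken by $K$ when $X_1=X_2=x_p$. The whole lemma then reduces to a change of variables in the conditional density together with a single substitution at the endpoint.

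First I would write the $N=2$ density $t_\nu(x_1,x_2)$ with correlation matrix $\Sigma$, whose quadratic form is $(x_1^2-2\rho x_1 x_2+x_2^2)/(\nu(1-\rho^2))$. Completing the square as $(x_1-\rho x_2)^2+(1-\rho^2)x_2^2$ lets me factor the relevant bracket into $(1+x_2^2/\nu)$ times $\bigl[1+(x_1-\rho x_2)^2/((1-\rho^2)(\nu+x_2^2))\bigr]$. Dividing the joint density by the univariate $t_\nu$ marginal of $X_2$ --- which carries exactly the factor $(1+x_2^2/\nu)^{-(\nu+1)/2}$ --- the powers of $(1+x_2^2/\nu)$ collapse to an $x_1$-independent constant, leaving a conditional density of $X_1$ proportional to $\bigl[1+(x_1-\rho x_2)^2/((1-\rho^2)(\nu+x_2^2))\bigr]^{-(\nu+2)/2}$.

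Next I would substitute $K=\sqrt{(\nu+1)/(1-\rho^2)}\,(X_1-\rho x_2)/\sqrt{\nu+x_2^2}$, so that the bracket becomes exactly $1+K^2/(\nu+1)$ with exponent $-(\nu+2)/2$, which is precisely the Student kernel for $\nu+1$ degrees of freedom. Since $K$ is affine in $X_1$ its Jacobian is constant, so the conditional density of $K$ has the functional form of the $t_{\nu+1}$ density; both integrate to one, hence they coincide, giving $K\mid X_2=x_p\sim t_{\nu+1}$. Setting $x_2=x_p$ and invoking the monotonicity remark then yields $\mathds{P}[K\le k(p)\mid X_2=x_p]=T_{\nu+1}(k(p))$. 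It remains to check that the $k(p)$ of \eqref{eq:K_def1} is indeed $K$ evaluated at $X_1=X_2=x_p$: plugging in and using $1-\rho^2=(1-\rho)(1+\rho)$ together with $x_p/\sqrt{\nu+x_p^2}=(\nu x_p^{-2}+1)^{-1/2}$ reproduces the stated expression.

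I expect the only delicate point to be the bookkeeping that pins down the conditional as $t_{\nu+1}$ rather than merely some rescaled Student: one must verify that the marginal contributes exactly the power $(\nu+1)/2$ and that the residual exponent is $(\nu+2)/2$, which is what identifies the $\nu+1$ degrees of freedom and simultaneously fixes the coefficient $\sqrt{\nu+1}$ inside $K$. Everything else is routine algebra, and the endpoint evaluation of $k(p)$ is a direct substitution.
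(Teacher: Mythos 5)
Your proof is correct and takes essentially the same route as the paper: its one-line argument amounts to verifying $t_{\nu+1}(k)=\mathcal{P}_{X_1|X_2}(x_1|x_2)\,\frac{\partial x_1}{\partial k}$ at $x_2=x_p$, i.e.\ precisely the change of variables in the conditional density that you carry out. Your completing-the-square derivation of the conditional, the identification of the $t_{\nu+1}$ kernel via the exponent $-(\nu+2)/2$, and the endpoint check that $k(p)$ is $K$ evaluated at $X_1=x_p$ simply make explicit the steps the paper leaves to the reader.
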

\begin{proof}
The proof procedes straightforwardly by showing that $t_{\nu+1}(k)=\mathcal{P}_{X_1|X_2}(x_1|x_2)\frac{\partial{}x_1}{\partial{}k}$ when $x_2=x_p$.
The particular result for the limit case $p=1$ is stated in \cite{embrechts02correlation}.
\end{proof}

\begin{theorem}\footnote{
This result was simultaneously found by Manner and Segers \cite{manner2009tails} (in a somewhat more general context). 
We still sketch our proof because it follows a different route (uses the Copula), and the final expression looks quite different, although of course numerically identical.
}
Let $(X_1,X_2)$ be defined like in the Lemma above.
The pre-asymptotic behaviour of its tail dependence when $p\to 1$ is approximated by the following expansion in (rational) powers of $(1-p)$:
\begin{equation}\tag{\ref{eq:tau_puissance}}
	\tUU(p)=\tUU^*(\nu,\rho) + \beta(\nu,\rho)\cdot (1-p)^{\frac{2}{\nu}} + \mathcal{O}\big((1-p)^{\frac{4}{\nu}}\big)
\end{equation}
with
\begin{align}
	\tUU^*(\nu,\rho)&=2-2\:T_{\nu+1}(k(1))\\
	 \beta(\nu,\rho)&=\frac{\nu^{\frac{2}{\nu}+1}}{\frac{2}{\nu}+1}k(1)\,t_{\nu+1}\big(k(1)\big)\,L_{\nu}^{-\frac{2}{\nu}}\\\nonumber
	                &=\left(\frac{\Gamma(\frac{\nu}{2})\sqrt{\pi}}{\Gamma(\frac{\nu+1}{2})}\right)^{\frac{2}{\nu}}
		          \frac{\nu^{\frac{2}{\nu}}}{\frac{2}{\nu}+1}
		          \frac{\sqrt{\nu+1}\sqrt{1-\rho}}{\sqrt{1+\rho}}\cdot{}t_{\nu+1}\left(\frac{\sqrt{\nu+1}\sqrt{1-\rho}}{\sqrt{1+\rho}}\right)
\end{align}
\end{theorem}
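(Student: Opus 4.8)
The plan is to reduce the whole computation to a single one-dimensional integral built from the conditional quantity supplied by the Lemma, and then to expand that integral as $p\to 1$. First I would record the geometric fact that makes the Lemma directly applicable: when $X_2=x_p$, substituting $X_1=x_p$ into the definition of $K$ yields exactly $K=k(p)$, and since $K$ is increasing in $X_1$, the event $\{K\le k(p)\}$ coincides with $\{X_1\le x_p\}$. Hence the Lemma reads $\partial_{u_2}C(p,p)=\mathds{P}[X_1\le x_p\mid X_2=x_p]=T_{\nu+1}(k(p))$, where $C$ is the Student copula. By exchangeability of the pair (identical marginals, symmetric $\rho$) the same value holds for $\partial_{u_1}C(p,p)$, so that
\[
\frac{\mathrm{d}}{\mathrm{d}p}\,C(p,p)=\partial_{u_1}C(p,p)+\partial_{u_2}C(p,p)=2\,T_{\nu+1}\big(k(p)\big).
\]
Writing $\tUU(p)=\frac{1-2p+C(p,p)}{1-p}$, noting $1-2p+C(p,p)$ vanishes at $p=1$, and integrating the derivative back from $1$ gives the exact representation
\[
\tUU(p)=\frac{2}{1-p}\int_p^1\big(1-T_{\nu+1}(k(q))\big)\,\mathrm{d}q .
\]
This is the ``copula route'' promised in the footnote, and the factor $2$ responsible for the known prefactor in $\tUU^*$ emerges automatically from the diagonal symmetry rather than from a delicate symmetrisation of the joint tail.

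The leading order is then immediate. As $q\to1$ the quantile $x_q\to\infty$, and the second form of $k$ in the Lemma gives $k(q)=k(1)\,(1+\nu/x_q^2)^{-1/2}\to k(1)$; the integrand tends to the constant $1-T_{\nu+1}(k(1))$, so that $\tUU^*=2\,\big(1-T_{\nu+1}(k(1))\big)=2-2\,T_{\nu+1}(k(1))$, matching the $p=1$ case quoted after the Lemma. For the correction I would linearise $k(q)-k(1)\simeq-\tfrac{\nu\,k(1)}{2\,x_q^2}$, whence $1-T_{\nu+1}(k(q))\simeq\big(1-T_{\nu+1}(k(1))\big)+\tfrac{\nu\,k(1)}{2\,x_q^2}\,t_{\nu+1}(k(1))$. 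Substituting $q=T_\nu(x)$ turns the correction into $\int_{x_p}^\infty x^{-2}\,t_\nu(x)\,\mathrm{d}x$, which by the power-law tail $t_\nu(x)\sim L_\nu\,x^{-(\nu+1)}$ behaves as $\tfrac{L_\nu}{\nu+2}\,x_p^{-(\nu+2)}$.

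The final ingredient is the quantile inversion: $1-p=\overline{T}_\nu(x_p)\sim\tfrac{L_\nu}{\nu}\,x_p^{-\nu}$, i.e. $x_p^{-2}\sim\big(\nu(1-p)/L_\nu\big)^{2/\nu}$, which is precisely where the rational exponent $2/\nu$ is born. Assembling the pieces yields $\tUU(p)\simeq\tUU^*+\beta\,(1-p)^{2/\nu}$ with $\beta=\nu\cdot\tfrac{\nu}{\nu+2}\cdot(\nu/L_\nu)^{2/\nu}\,k(1)\,t_{\nu+1}(k(1))$; using $\tfrac{\nu}{\nu+2}=\tfrac{1}{2/\nu+1}$ and $\nu\cdot\nu^{2/\nu}=\nu^{2/\nu+1}$ reproduces the stated $\beta=\tfrac{\nu^{2/\nu+1}}{2/\nu+1}\,k(1)\,t_{\nu+1}(k(1))\,L_\nu^{-2/\nu}$, and the second displayed form follows by inserting $k(1)=\tfrac{\sqrt{\nu+1}\sqrt{1-\rho}}{\sqrt{1+\rho}}$ together with the explicit value of $L_\nu$.

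The main obstacle is not the leading calculation but the bookkeeping of the remainder, and the exact tracking of constants. I would need to verify that every neglected contribution is genuinely $\mathcal{O}\big((1-p)^{4/\nu}\big)$: the second-order Taylor term of $1-T_{\nu+1}(k(\cdot))$ is of order $x_q^{-4}$, and the subleading term of the Student density is a relative correction of order $x_p^{-2}$, so both feed the integral only at order $x_p^{-4}\sim(1-p)^{4/\nu}$, consistent with the claimed error. Making this rigorous requires a uniform control of the tail expansion of $t_\nu$ and a dominated-convergence argument near $x_q=\infty$, since the integrand is evaluated over the full half-line. The genuinely delicate point is the normalisation constant: a slip between the leading coefficient of $t_\nu$ and that of $\overline{T}_\nu$ would corrupt the $L_\nu^{-2/\nu}$ and $\nu^{2/\nu+1}$ factors, so I would fix $L_\nu$ from $t_\nu(x)\sim L_\nu\,x^{-(\nu+1)}$ and carry it consistently through the quantile inversion.
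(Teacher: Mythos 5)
Your proposal is correct and follows essentially the same route as the paper's own proof: the Lemma yields $\frac{\mathrm{d}}{\mathrm{d}p}C(p,p)=2\,T_{\nu+1}\big(k(p)\big)$, which you integrate along the diagonal using the expansion $k(p)-k(1)\simeq -\tfrac{\nu k(1)}{2x_p^2}$ and the quantile inversion $1-p\simeq \tfrac{L_\nu}{\nu}x_p^{-\nu}$, recovering exactly the stated $\tUU^*$ and $\beta$. Your additions --- the explicit exchangeability argument producing the factor $2$, the substitution $q=T_\nu(x)$, and the check that the neglected terms are $\mathcal{O}\big((1-p)^{4/\nu}\big)$ --- simply make explicit steps the paper leaves implicit.
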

\begin{proof}

Recall from Eq.~\eqref{eq:tau_C} that
\begin{equation}\label{eq:app1}
	\tUU(p)=2-\frac{C(p,p)-C(1,1)}{p-1}=2-\frac{1}{1-p}\int_p^1\frac{dC(p,p)}{dp}dp
\end{equation}
One easily shows that
\begin{align*}
	dC(p,p)&=2\:\mathds{P}[X_1\leq{}x_p\mid{}X_2=x_p]dp=2\:T_{\nu+1}\big(k(p)\big)dp
\end{align*}
where the second equality holds in virtue of the aforementioned lemma.\newline
Now, for $p$ close to $1$, $k(p)=k(1)\left(1-\frac{1}{2}\frac{\nu}{x_p^2}\right)+\mathcal{O}(x_p^{-4})$, and
\begin{equation*}
	\frac{dC(p,p)}{dp}=2\,T_{\nu+1}\big(k(p)\big)\approx{}2\,T_{\nu+1}\big(k(1)\big)-2\big(k(1)-k(p)\big)\cdot{}t_{\nu+1}\big(k(1)\big)
\end{equation*}
But since the Student distribution behaves as a power-law precisely in the region $p\lesssim 1$, we write $t_{\nu}(x)\approx L_{\nu}/x^{\nu+1}$ and immediately get
\begin{equation*}
	1-p=\frac{L_{\nu}}{\nu}x_p^{-\nu}\qquad\text{with}\quad L_{\nu}=\frac{\Gamma(\frac{\nu+1}{2})}{\Gamma(\frac{\nu}{2})\sqrt{\pi}}\nu^{\frac{\nu}{2}}
\end{equation*}
The result follows by collecting all the terms and performing the integration in \eqref{eq:app1}.
\end{proof}

Notice that for large $\nu$, the exponent is almost zero and the correction term is of order $\mathcal{O}(1)$, and the expansion ceases to hold.
This is particularly true for the Gaussian distribution ($\nu=\infty$) for which the behaviour is radically different at the limit $p\to 1$
(where there's strictly no tail correlation) and at $p<1$ where a dependence subsists, see Fig.~\ref{fig:tau_stud1}.

\section*{Acknowledgments}

We want to thank F.~Abergel, R.~Allez, S.~Ciliberti, L.~Laloux \& M.~Potters for help and useful comments.

\nocite{coles1999dependence}
\nocite{klein2005capital}
\nocite{dempster2002risk}

\bibliographystyle{ijtaf}
\bibliography{../biblio_all}

\end{document}